\documentclass{llncs}

\usepackage{amsmath}
\usepackage{amssymb}
\usepackage{color}
\usepackage{epstopdf}
\usepackage{times}

\usepackage{url}
\urldef{\mailsa}\path|{Shenggang.Ying, Yuan.Feng, Mingsheng.Ying}@uts.edu.au|

\usepackage{graphicx}
\usepackage[ruled]{algorithm2e}
\usepackage{dsfont}

\def\bscc{B}

\def\>{\ensuremath{\rangle}}
\def\<{\ensuremath{\langle}}
\def\ra{\ensuremath{\rightarrow}}

\def\e{\ensuremath{\mathcal{E}}}

\def\h{\ensuremath{\mathcal{H}}}
\def\la{\ensuremath{\leftarrow}}

\newcommand{\ket}[1]{|#1\rangle}

\newcommand{\tr}{{\rm tr}}
\newcommand {\spa } {{\rm span}}
\newcommand {\supp } {{\rm supp}}
\newcommand {\I } {{\mathcal I}}

\newcommand{\B}{\mathcal{B}}

\newcommand {\D } {{\mathcal{D}}}
\newcommand {\E } {{\mathcal{E}}}

\newcommand{\hs}{\mathcal{H}}

\newcommand{\RG}{{\mathcal{R}_\mathcal{G}}}
\newcommand{\G}{{\mathcal{G}}}

\begin{document}

\title{Reachability Probabilities of Quantum Markov Chains}

\titlerunning{Quantum Markov Chains}

\author{Shenggang Ying, Yuan Feng, Nengkun Yu \and Mingsheng Ying}
\authorrunning{S. G. Ying, Y. Feng and M. S. Ying}

\institute{Tsinghua University, China\\ University of Technology,
Sydney, Australia\\
\mailsa
}

\toctitle{Quantum Markov Chains}
\tocauthor{Ying, Feng and Ying}
\maketitle

\begin{abstract}

This paper studies three kinds of long-term behaviour, namely reachability, repeated reachability and persistence, of quantum Markov chains (qMCs). As a stepping-stone, we introduce the notion of bottom strongly connected component (BSCC) of a qMC and develop an algorithm for finding BSCC decompositions of the state space of a qMC. As the major contribution, several (classical) algorithms for computing the reachability, repeated reachability and persistence probabilities of a qMC are presented, and their complexities are analysed.
\end{abstract}

\begin{keywords}
    quantum Markov chains, reachability, persistence.
\end{keywords}

\section{Introduction}

Verification problems of quantum systems are emerging from quantum physics, quantum communication and quantum computation. For example, verification has been identified by physicists as one of the major short-term goals of quantum simulation~\cite{CZ12}. Some effective verification techniques for quantum cryptographic protocols have recently been developed \cite{GPN10}, \cite{DAV11}, based on either quantum process algebras~\cite{JL04}, \cite{GN05}, \cite{FDY11}, \cite{FDY12} or quantum model-checking~\cite{GPN08}. Also, several methods for verifying quantum programs \cite{Se04} have been proposed, including quantum weakest preconditions \cite{DP06} and quantum Floyd-Hoare logic \cite{Yin11}.

A quantum Markov chain (qMC) is a quantum generalisation of Markov chain (MC) where, roughly speaking, the state space is a Hilbert space, and the transition probability matrix of a MC is replaced by a super-operator, which is a mathematical formalism of the discrete-time evolution of (open) quantum systems. qMCs have been widely employed as a mathematical model of quantum noise in physics~\cite{GZ04} and as a model of communication channels in quantum information theory~\cite{NC00}. A special class of qMCs, namely quantum walks, has been successfully used in design and analysis of quantum algorithms~\cite{Am03}. Recently, the authors~\cite{YY12} introduced a model of concurrent quantum programs in terms of qMCs as a quantum extension of Hart-Sharir-Pnueli's Markov chain model of probabilistic concurrent programs~\cite{HSP83}. This paper considers the verification problem of qMCs.

Reachability analysis is at the center of verification and model-checking of both classical and probabilistic systems.
Reachability of quantum systems was first studied by physicists~\cite{SSL02} within the theme of quantum control
, but they only considered states reachable in a single step of evolution.
In~\cite{YY12}, reachability of qMCs was considered, and it was used in termination checking of concurrent quantum programs. However, reachability studied in \cite{YY12} can be properly described as \textit{qualitative} reachability because only algorithms for computing reachable subspaces but not reachability probabilities were developed. This paper is a continuation of \cite{YY12} and aims at \textit{quantitative} reachability analysis for qMCs. More precisely, the main purpose  of this paper is to develop (classical) algorithms for computing the reachability, repeated reachability and persistence probabilities of qMCs.

Reachability analysis techniques for classical MCs heavily depends on algorithms for graph-reachability problems, in particular for finding bottom strongly connected components (BSCCs) of the underlying graph of a MC (see \cite[Section 10.1.2]{BK08}). Such algorithms have been intensively studied by the graph algorithms community since early 1970's (see~\cite[Part VI]{Algor}; \cite{Yanna90}), and are ready to be directly adopted in reachability analysis of MCs. However, we don't have the corresponding algorithms for qMCs in hands and have to start from scratch. So, in order to conduct reachability analysis for qMCs we introduce the notion of BSCC and develop an algorithm for finding BSCC decomposition for qMCs in this paper.
Interestingly, there are some essential differences between BSCCs in the classical and quantum cases. For example, BSCC decomposition of a qMC is unnecessary to be unique. Also, classical algorithms for finding BSCCs like depth-first search cannot be directly generalised to qMCs. Instead, it requires very different ideas to develop algorithms for finding BSCCs of qMCs, appealing to matrix operation algorithms~\cite[Chapter 28]{Algor} through matrix representation of super-operators. The major challenge in dealing with quantum BSCCs, which would not arise in classical BSCCs at all, is to maintain the linear algebraic structure underpinning quantum systems. We believe that these results for quantum BSCCs obtained in this paper are also of independent significance.

This paper is organised as follows. The preliminaries are presented in Sec.~\ref{sec:Preliminaries}; in particular we recall the notion of qMC and define the graph structure of a qMC. The notion of BSCC of a qMC is introduced in Sec.~\ref{sec: BSCC}, where a characterisation of quantum BSCC is given in terms of the fixed points of super-operators, and an algorithm for checking whether a subspace of the state Hilbert space of a qMC is a BSCC is given. In Sec.~\ref{sec:de}, we define the notion of transient subspace of a qMC and show that the state space of a qMC can be decomposed into the direct sum of a transient subspace and a family of BSCCs. Furthermore, it is proved that although such a decomposition is not unique, the dimensions of its components are fixed. In particular, an algorithm for constructing BSCC decomposition of qMCs is found. With the preparation in Secs.~\ref{sec: BSCC} and \ref{sec:de}, we examine reachability of a qMC in Sec.~\ref{sec:rb}, where  an algorithm for computing reachability probability is presented. An algorithm for computing repeated reachability and persistence probabilities is finally developed in Sec. \ref{sec:pers}. Sec.~\ref{sec:Conclusions} is a brief conclusion.

\section{Quantum Markov Chains and Their Graph Structures}\label{sec:Preliminaries}

\subsection{Basics of Quantum Theory}

For convenience of the reader, we recall some basic notions from quantum theory; for details we refer to \cite{NC00}. The state space of a quantum system is a Hilbert space. In this paper, we only consider a finite-dimensional Hilbert space $\hs$, which is just a finite-dimensional complex vector space with inner product. The inner product of two vectors $|\phi\>,|\psi\>\in\hs$ is denoted by $\<\phi|\psi\>$. A pure quantum state is a normalised vector $|\phi\rangle$ in $\hs$ with $\<\phi|\phi\> = 1$. We say that two vectors $|\phi\rangle$ and $|\psi\rangle$ are orthogonal, written $|\phi\rangle\bot |\psi\rangle$, if $\langle\phi|\psi\rangle=0$.
A mixed state is represented by a density operator, i.e. a positive operator $\rho$ on $\hs$ with $tr(\rho)=1$, or equivalently a positive semi-definite and trace-one $n\times n$ matrix if $\dim\hs =n$. In particular, for each pure state $|\psi\rangle$, there is a corresponding density operator $\psi=|\psi\rangle\langle\psi|$. For simplicity, we often use pure state $|\psi\rangle$ and density operator $\psi$ interchangeably. A positive operator $\rho$ is called a partial density operator if trace $\tr(\rho)\leq 1$. The set of partial density operators on $\hs$ is denoted by $\mathcal{D}(\hs)$. The support $\supp (\rho)$ of a partial density operator $\rho\in\mathcal{D}(\mathcal{H})$ is defined to be the space spanned by the eigenvectors of $\rho$ with non-zero eigenvalues. The set of all (bounded) operators on $\hs$, i.e. $d\times d$ complex matrices with $d=\dim\hs$, is denoted by $\B(\hs)$.

For any set $V$ of vectors in $\hs$, we write $\spa V$ for the subspace of
$\hs$ spanned by $V$; that is, it consists of all finite linear combinations of vectors in $V$.
Two subspaces $X$ and $Y$ of $\hs$ are said to be orthogonal, written $X\bot Y$, if $|\phi\rangle\bot |\psi\rangle$ for any $|\phi\rangle\in X$ and $|\psi\rangle\in Y$. The ortho-complement $X^{\bot}$ of a subspace $X$ of $\hs$ is the subspace of vectors orthogonal to all vectors in $X$. An operator $P$ is called the projection onto a subspace $X$ if $P\ket{\psi}=\ket{\psi}$ for all $\ket{\psi}\in X$ and $P\ket{\psi}=0$ for all $\ket{\psi}\in X^{\bot}$. We write $P_X$ for the projection onto $X$. According to the theory of quantum measurements, for any density operator $\rho$, trace $tr(P_X\rho)$ is the probability that the mixed state $\rho$ lies in subspace $X$.
Let $\{X_k\}$ be a family of
subspaces of $\hs$. Then the join of $\{X_k\}$ is defined by $$\bigvee_k
X_k=\spa(\bigcup_k X_k).$$ In particular, we write $X\vee Y$ for the join of
two subspaces $X$ and $Y$. It is easy to see that $\bigvee_k X_k$ is the smallest subspace of $\hs$ that contains all $X_k$.

Composed quantum systems are modeled by tensor products. If a quantum system consists of two subsystems with state spaces $\hs_1$ and $\hs_2$, then its state space is $\hs = \hs_1\otimes\hs_2$, which is the Hilbert space spanned by vectors $|\psi_1\rangle|\psi_2\rangle=|\psi_1\rangle\otimes|\psi_2\rangle$ with $|\psi_1\rangle\in\hs_1$ and $|\psi_2\rangle\in\hs_2$. For any operators $A_1$ on $\hs_1$ and $A_2$ on $\hs_2$, their tensor product $A_1\otimes A_2$ is defined by $$(A_1\otimes A_2)(|\psi_1\rangle|\psi_2\rangle)=(A_1|\psi_1\rangle)\otimes(A_2|\psi_2\rangle)$$ for all $|\psi_1\rangle\in\hs_1$ and $|\psi_2\rangle\in\hs_2$ together with linearity.

The evolution of a closed quantum system is described as a unitary operator, i.e. an operator $U$ on $\hs$ with $U^\dag U = UU^\dag = I$, where $I$ is the identity on $\hs$. A pure state $|\phi\>$ becomes $U|\phi\>$ after this unitary evolution $U$, while a mixed state $\rho$ becomes $U\rho U^\dag$. The dynamics of an open quantum system is described by a super-operator, i.e. a linear map
$\E$ from the space of linear operators on $\hs$ into itself, satisfying the following conditions:
\begin{enumerate}
\item $\tr[\E(\rho)]\leq \tr(\rho)$ for all $\rho\in\mathcal{D}(\hs)$, with equality for trace-preserving $\E$;
\item Complete positivity: for any extra Hilbert
space $\mathcal{H}_R$, $(\mathcal{I}_R\otimes \E)(A)$ is positive
provided $A$ is a positive operator on $\mathcal{H}_R\otimes \hs$, where
$\mathcal{I}_R$ is the identity map on the space of linear operators on $\mathcal{\hs}_R$.\end{enumerate}
In this paper, we only consider trace-preserving super-operators. Each super-operator has a Kraus operator-sum representation: $\E=\sum_i E_i\cdot E_i^{\dag}$, or more precisely $$\E(\rho) = \sum E_i \rho E_i^\dag$$ for all $\rho\in\mathcal{D}(\hs)$, where $E_i$ are operators on $\hs$ such that $\sum_i E^\dag E_i=I$.

\subsection{Quantum Markov Chains}\label{Sec:QuantumGraphs}

Now we are ready to introduce the notion of quantum Markov chain. Recall that a Markov chain is a pair $\langle S,P\rangle$, where $S$ is a finite set of states, and $P$ is a matrix of transition probabilities, i.e. a mapping $P: S\times S\rightarrow [0,1]$ such that $$\sum_{t\in S}P(s,t)=1$$ for every $s\in S$, where $P(s,t)$ is the probability of going from $s$ to $t$. A quantum Markov chain is a quantum generalisation of a Markov chain where the state space of a Markov chain is replaced by a Hilbert space and its transition matrix is replaced by a super-operator.

\begin{definition}A quantum Markov chain is a pair $\mathcal{G}=\langle\hs,\E\rangle$, where $\hs$ is a finite-dimensional Hilbert space, and $\E$ is a super-operator on $\hs$.\end{definition}

The behaviour of a quantum Markov chain can be described as follows: if currently the process is in a mixed state $\rho$, then it will be in state $\E(\rho)$ in the next step. Both $\rho$ and $\E(\rho)$ can be written as statistical ensembles:
$$\rho=\sum_i p_i |\phi_i\rangle\langle\phi_i|, \hspace{2em} \E(\rho)=\sum_j q_j |\psi_j\rangle\langle\psi_j|,$$
where $p_i, q_j\geq 0$ for all $i,j$, and $\sum_ip_i=\sum_jq_j=1$. So, super-operator $\E$ can be understood as an operation that transfers statistical ensemble $\{(p_i,|\phi_i\rangle)\}$ to $\{(q_j,|\psi_j\rangle)\}$. In this way, a quantum Markov chain can be seen as a generalisation of a Markov chain.

\subsection{Graphs in Quantum Markov Chains}

There is a natural graph structure underlying a quantum Markov chain. This can be seen clearly by introducing adjacency relation in it. To this end, we first introduce an auxiliary notion. The image of a subspace $X$ of $\hs$ under a super-operator $\E$ is defined to be $$\E(X) = \bigvee_{\ket{\psi}\in X}\supp (\E(\psi)).$$ Intuitively, $\E(X)$ is the subspace of $\hs$ spanned by the images under $\E$ of states in $X$.

\begin{definition} Let $\G = \<\hs, \E\>$ be a quantum Markov chain, and
let $\ket{\varphi}$ and $\ket{\psi}$ be pure states and $\rho$ and $\sigma$ mixed states in $\hs$. Then
\begin{enumerate}
\item $\ket{\varphi}$ is adjacent to $\ket{\psi}$ in $\mathcal{G}$, written $\ket{\psi}\rightarrow\ket{\varphi}$, if $\ket{\varphi}\in\E(X_\psi)$, where $X_\psi=\spa\{|\psi\rangle\}$. \item $\ket{\varphi}$ is adjacent to  $\rho$, written $\rho\rightarrow\ket{\varphi}$, if $\ket{\varphi}\in\E(\supp (\rho))$. \item $\sigma$ is adjacent to $\rho$, written $\rho\rightarrow\sigma$, if $\supp(\sigma)\subseteq\E(\supp (\rho))$.\end{enumerate}\end{definition}

\begin{definition}\begin{enumerate}\item
    A sequence $\pi=\rho_0\rightarrow\rho_1\rightarrow\cdot\cdot\cdot\rightarrow\rho_n$ of adjacent density operators in a quantum Markov chain $\G$ is called a path from $\rho_0$ to $\rho_n$ in $\G$, and its length is $|\pi|=n$.\item For any density operators $\rho$ and $\sigma$, if there is a path from $\rho$ to $\sigma$ then we say that $\sigma$ is reachable from $\rho$ in $\mathcal{G}$.
\end{enumerate}\end{definition}

\begin{definition} Let $\G = \<\hs, \E\>$ be a quantum Markov chain. For any $\rho\in \D(\hs)$, its reachable space in $\mathcal{G}$ is $$\mathcal{R}_{\mathcal{G}}(\rho)=\spa\{|\psi\rangle\in\mathcal{H}:|\psi\rangle\ {\rm is\ reachable\ from}\ \rho\ {\rm in}\ \mathcal{G}\}.$$
\end{definition}

The following lemma is very useful for our later discussion.
\begin{lemma}\label{lem:TransitiveRelation}\begin{enumerate} \item (Transitivity of reachability)
For any $\rho,\sigma\in \D(\hs)$, if $\supp(\rho)\subseteq \mathcal{R}_\mathcal{G}(\sigma)$, then $\mathcal{R}_\mathcal{G}(\rho) \subseteq \mathcal{R}_\mathcal{G}(\sigma)$.
\item \cite[Theorem 1]{YY12}\label{thm:reachableSpace}
If $d=\dim\mathcal{H}$, then for any $\rho\in \D(\hs)$, we have
\begin{equation}
    \mathcal{R}_{\mathcal{G}}(\rho)=\bigvee_{i=0}^{d-1} \supp(\E^i(\rho)).
\end{equation}\end{enumerate}
\end{lemma}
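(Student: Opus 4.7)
My plan is to prove part (2) first, from which part (1) follows easily.

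For part (2), the key algebraic facts I would establish from the Kraus representation $\E(\cdot) = \sum_k E_k \cdot E_k^\dag$ are: (a) $\E(X) = \sum_k E_k X$ as a subspace, so $\E$ extends naturally to an operator on subspaces; (b) $\E$ preserves joins, i.e.\ $\E(X \vee Y) = \E(X) \vee \E(Y)$; and (c) $\supp(\E^i(\rho)) = \E^i(\supp(\rho))$ for every $i \geq 0$, obtained by spectrally decomposing $\rho$ and observing that the rank-one positive operators $|E_k \phi_j\>\<E_k \phi_j|$ appearing in $\E^i(\rho)$ already span $\E^i(\supp(\rho))$.

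With these in hand, I would proceed in two steps to show $\RG(\rho) = \bigvee_{i=0}^{d-1} \E^i(\supp(\rho))$. Step 1 is to prove $\RG(\rho) = \bigvee_{i \geq 0} \E^i(\supp(\rho))$. For $\subseteq$, a trivial induction on path length gives $\supp(\rho_j) \subseteq \E^j(\supp(\rho))$ for any path $\rho = \rho_0 \rightarrow \cdots \rightarrow \rho_n$, using the defining inclusion $\supp(\rho_{j+1}) \subseteq \E(\supp(\rho_j))$. For $\supseteq$, given any $|\psi\> \in \E^n(\supp(\rho))$, I exhibit an explicit path by taking $\rho_j = P_{\E^j(\supp(\rho))}/\dim \E^j(\supp(\rho))$ for $j < n$ and choosing $\rho_n$ with $|\psi\>$ in its support inside $\E^n(\supp(\rho))$, relying on the fact that any subspace $Y$ is the support of the density operator $P_Y/\dim Y$. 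Step 2 is to truncate the join at $d-1$: the chain $V_k := \bigvee_{i=0}^{k} \E^i(\supp(\rho))$ is monotonically increasing in the $d$-dimensional $\hs$; by (b), once $V_{k+1} = V_k$ we also get $V_{k+2} = V_{k+1}$ by applying $\E$ to both sides, so stabilisation propagates. A dimension count then forces $V_{d-1} = V_\infty$.

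For part (1), the immediate consequence of part (2) is that $\RG(\sigma)$ is $\E$-invariant: $\E(\RG(\sigma)) = \bigvee_{i=1}^{d} \E^i(\supp(\sigma)) \subseteq \RG(\sigma)$ by the stabilisation past step $d-1$. If $\supp(\rho) \subseteq \RG(\sigma)$, iterating this invariance yields $\E^i(\supp(\rho)) \subseteq \RG(\sigma)$ for every $i$; taking the join gives $\RG(\rho) \subseteq \RG(\sigma)$. The main obstacle I anticipate is the $\supseteq$ direction of Step 1, where one must translate between the path-based definition of reachability (which allows arbitrary intermediate density operators) and the clean algebraic description via iterated $\E$-images; once the density-operator-lifting trick for a prescribed support is pinned down, the remainder is linear algebra and dimension counting.
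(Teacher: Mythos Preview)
Your argument is correct and, for part~(1), coincides with the paper's proof: both use the identity $\RG(\cdot)=\bigvee_{m\ge 0}\E^m(\supp(\cdot))$ from part~(2) together with monotonicity and join-preservation of $\E$ (your phrasing via $\E$-invariance of $\RG(\sigma)$ is just a repackaging of the paper's one-line inclusion chain). The paper does not prove part~(2) at all, deferring it to the cited reference; your stabilising-chain argument with the dimension bound is the standard proof and is fine.
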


\section{Bottom Strongly Connected Components}\label{sec: BSCC}

\subsection{Basic Definitions}

The notion of bottom strongly connected component plays an important role in model checking Markov chains. In this section, we extend this notion to the quantum case. We first introduce an auxiliary notation. Let $X$ be a subspace of a Hilbert space, and let $\E$ be a super-operator on $\hs$. Then the restriction of $\E$ on $X$ is defined to be super-operator $\E|_X$ with $$\E|_X(\rho)=P_X\E(\rho)P_X$$ for all $\rho\in\D(X)$, where $P_X$ is the projection onto $X$.

\begin{definition} Let $\G=\langle\hs,\E\rangle$ be a quantum Markov chain. A subspace $X$ of $\hs$ is called strongly connected in $\G$ if for any $\ket{\varphi},\ket{\psi}\in X$, we have $\ket{\varphi} \in\mathcal{R}_{\mathcal{G}_X}(\psi)$ and $\ket{\psi}\in\mathcal{R}_{\mathcal{G}_X}(\varphi)$, where quantum Markov chain $\G_X=\langle X,\E_X\rangle$ is the restriction of $\G$ on $X$.\end{definition}

We write $SC(\mathcal{G})$ for the set of strongly connected subspaces of $\hs$ in $\mathcal{G}.$ It is easy to see that $(SC(\mathcal{G}),\subseteq)$ is an inductive set; that is, for any subset $\{X_i\}$ of $SC(\mathcal{G})$ that is linearly ordered by $\subseteq$, we have $\bigcup_iX_i\in SC(\mathcal{G})$. Thus, by Zorn lemma we assert that there exists a maximal element in $SC(\mathcal{G})$.

\begin{definition}A maximal element of $(SC(\mathcal{G}),\subseteq)$ is called a strongly connected component (SCC) of $\mathcal{G}$.\end{definition}

To define bottom strongly connected component, we need an auxiliary notion of invariant subspace.

\begin{definition}Let $\G=\langle\hs,\E\rangle$ be a quantum Markov chain. Then a subspace $X$ of $\hs$ is said to be invariant in $\G$ if $\E(X)\subseteq X$.\end{definition}

It is easy to see that if super-operator $\E$ has the Kraus representation $\E=\sum_i E_i\cdot E_i^\dag$, then $X$ is invariant if and only if $E_i X \subseteq X$ for all $i$. Recall that in a classical Markov chain, the probability of staying in an invariant subset is non-decreasing. A quantum generalisation of this fact is presented in the following:

\begin{theorem}\label{thm:SPnondec}
    For any invariant subspace $X$ of $\hs$ in a quantum Markov chain $\G=\langle\hs,\E\rangle$, we have $$tr(P_X\E(\rho)) \geq tr(P_X\rho)$$ for all $\rho\in\D(\hs)$, where $P_X$ is the projection onto $X$.
\end{theorem}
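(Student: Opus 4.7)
The plan is to work from the Kraus operator-sum representation $\E=\sum_i E_i\cdot E_i^{\dag}$ with $\sum_i E_i^{\dag}E_i=I$. By the observation recorded just before the theorem, the invariance $\E(X)\subseteq X$ is equivalent to $E_i X\subseteq X$ for every $i$, which in turn amounts to the matrix identity $P_{X^{\bot}}E_i P_X=0$. So each $E_i$ is block upper-triangular with respect to the orthogonal decomposition $\hs=X\oplus X^{\bot}$.

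First, I would rewrite the claim as an operator inequality. Using the cyclicity of trace,
\[
\tr(P_X\E(\rho))=\sum_i\tr(P_X E_i\rho E_i^{\dag})=\tr\!\Big(\big(\textstyle\sum_i E_i^{\dag}P_X E_i\big)\rho\Big),
\]
so since $\rho\in\D(\hs)$ is arbitrary, it suffices to show that $M:=\sum_i E_i^{\dag}P_X E_i\geq P_X$ as operators on $\hs$.

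Next, I would write $E_i$ in the block form $E_i=\bigl(\begin{smallmatrix}A_i & B_i\\0 & C_i\end{smallmatrix}\bigr)$ induced by $\hs=X\oplus X^{\bot}$, where the zero block encodes invariance. A direct block computation gives
\[
E_i^{\dag}P_X E_i=\begin{pmatrix}A_i^{\dag}A_i & A_i^{\dag}B_i\\ B_i^{\dag}A_i & B_i^{\dag}B_i\end{pmatrix},\qquad E_i^{\dag}E_i=\begin{pmatrix}A_i^{\dag}A_i & A_i^{\dag}B_i\\ B_i^{\dag}A_i & B_i^{\dag}B_i+C_i^{\dag}C_i\end{pmatrix}.
\]
The trace-preserving condition $\sum_i E_i^{\dag}E_i=I$ then forces $\sum_i A_i^{\dag}A_i=I_X$ and $\sum_i A_i^{\dag}B_i=0$. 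Substituting back, all off-diagonal contributions in $M$ cancel and the $X$-block collapses to $I_X$, so
\[
M-P_X=\begin{pmatrix}0 & 0\\ 0 & \sum_i B_i^{\dag}B_i\end{pmatrix},
\]
which is manifestly positive semi-definite. Taking trace against $\rho$ finishes the proof.

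The argument is essentially algebraic and I do not expect any real obstacle; the only subtle point is to recognise at the outset that the inequality is most cleanly handled by dualising it into the operator inequality $\sum_i E_i^{\dag}P_X E_i\geq P_X$, so that one can exploit the triangular block structure coming from $E_i X\subseteq X$ together with the normalisation $\sum_i E_i^{\dag}E_i=I$ in parallel.
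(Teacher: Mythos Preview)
Your proof is correct and follows essentially the same strategy as the paper: both arguments dualise the inequality to the operator inequality $\E^*(P_X)\geq P_X$ (equivalently $\E^*(P_{X^\bot})\leq P_{X^\bot}$) using the Kraus form together with the invariance condition $E_iX\subseteq X$. The only cosmetic difference is that the paper works with $Q=P_{X^\bot}$ and argues abstractly that $\supp(\E^*(Q))\subseteq X^\bot$ and $\E^*(Q)\leq\E^*(I)=I$ force $\E^*(Q)\leq Q$, whereas you carry out the equivalent explicit block computation on $P_X$.
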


Now we are ready to introduce the key notion of this section.

\begin{definition}
Let $\G=\langle\hs,\E\rangle$ be a quantum Markov chain. Then a subspace $X$ of $\hs$ is called a bottom strongly connected component (BSCC) of $\G$ if it is a SCC of $\G$ and invariant in $\G$.
\end{definition}

\begin{example}\label{exam:E5} Consider quantum Markov chain $\G=\langle\hs,\E\rangle$ with state space
     $\hs=\spa\{|0\>,$ $\cdots,|4\>\}$ and super-operator $$\E=\sum_{i=1}^5 E_i\cdot E_i^\dag,$$ where the operators $E_i$ (i=1,...,5) are given as follows:
    \begin{eqnarray*}
    E_1 &=& \frac{1}{\sqrt{2}}(|1\>\<0 +1| + |3\>\<2+3|),\ \ \   E_2 =\frac{1}{\sqrt{2}}(|1\>\<0-1| + |3\>\<2-3|),\\
    E_3 &=&\frac{1}{\sqrt{2}}(|0\>\<0+1| + |2\>\<2+3|),\ \ \   E_4 =\frac{1}{\sqrt{2}}(|0\>\<0-1| + |2\>\<2-3|),\\
    E_5 &=&\frac{1}{10}(|0\>\<4| + |1\>\<4| + |2\>\<4| + 4 |3\>\<4| + 9 |4\>\<4|),
    \end{eqnarray*}
and the states used above are defined by $$|0\pm 1\>= (|0\>\pm |1\>)/\sqrt{2}\ {\rm and}\ |2\pm 3\>= (|2\>\pm |3\>)/\sqrt{2}.$$ It is easy to see that $\bscc=\spa\{|0\>,|1\>\}$ is a BSCC of quantum Markov chain $\G$, as for any $|\psi\>=\alpha|0\>+\beta|1\>\in \bscc$, we have $\E(\psi) = (|0\>\<0|+|1\>\<1|)/2$.
\end{example}

The following lemma clarifies the relationship between different BSCCs.
\begin{lemma}\label{lem:BSCCRG}\begin{enumerate}
\item For any two different BSCCs $X$ and $Y$ of quantum Markov chain $\G$, we have $X \cap Y = \{0\}$ ($0$-dimensional Hilbert space).
\item If $X$ and $Y$ are two BSCCs of $\G$ with $\dim X\neq \dim Y$, then $X\bot Y$.
\end{enumerate}
\end{lemma}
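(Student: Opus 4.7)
Both parts hinge on the following sub-lemma that I will establish first: a BSCC $X$ admits no proper nonzero $\E$-invariant subspace. Given an invariant $Z \subseteq X$ with $|\psi\rangle \in Z$, I note that the iterates $\E^i(\psi)$ all have support in $Z$, so $\RG(\psi) \subseteq Z$ by Lemma~\ref{lem:TransitiveRelation}(2). Since $X$ itself is invariant, $\E$ and $\E|_X$ agree on states in $\D(X)$, so $\mathcal{R}_{\G_X}(\psi) = \RG(\psi)$, and the strong connectivity of $X$ then forces $\mathcal{R}_{\G_X}(\psi) = X$, giving $Z = X$.

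For part 1, my plan is to observe that $X \cap Y$ is $\E$-invariant, since $\E(X \cap Y) \subseteq \E(X) \cap \E(Y) \subseteq X \cap Y$ by the invariance of $X$ and $Y$. If $X \cap Y \neq \{0\}$, the sub-lemma applied inside $X$ will force $X \cap Y = X$; symmetrically $X \cap Y = Y$; hence $X = Y$, contradicting $X \neq Y$.

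For part 2, I will assume $\dim X < \dim Y$ without loss of generality. By part 1, $X + Y = X \oplus Y$ is an algebraic direct sum of dimension $\dim X + \dim Y$ and is itself $\E$-invariant, and each Kraus operator $E_i$ preserves $X$ and $Y$ individually, acting block-diagonally on $X \oplus Y$. The plan is to invoke Burnside's theorem: the irreducibility of $\{E_i|_X\}$ on $X$ (by the sub-lemma) shows the algebra it generates is all of $\B(X)$, and similarly for $Y$. A Goursat-style argument for subalgebras of $\B(X) \oplus \B(Y)$ with surjective coordinate projections will then show the algebra $\mathcal{A}$ generated by $\{E_i|_{X+Y}\}$ is either $\B(X) \oplus \B(Y)$ or the graph of an isomorphism $\B(X) \cong \B(Y)$; since matrix algebras are isomorphic only at equal dimensions, the latter is excluded, leaving $\mathcal{A} = \B(X) \oplus \B(Y)$. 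The main obstacle will be to convert this algebraic conclusion into the geometric orthogonality $X \perp Y$: I plan to exploit that the oblique projection onto $X$ along $Y$ lies in $\mathcal{A}$ (hence is a polynomial in the Kraus operators) and to combine this with trace preservation $\sum_i E_i^\dag E_i = I$ and the uniqueness of the full-rank invariant states $\rho_X, \rho_Y$ on $X, Y$ (from Perron--Frobenius for irreducible CPTP maps), forcing that oblique projection to coincide with the orthogonal projection $P_X|_{X+Y}$; this coincidence is possible only when $X \perp Y$.
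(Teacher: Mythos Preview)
Your Part~1 is correct and essentially identical to the paper's argument: your sub-lemma is equivalent to the characterisation $\RG(\phi)=X$ for every nonzero $|\phi\rangle\in X$ (Lemma~\ref{lem:BSCCRG1}), which the paper applies directly to any nonzero $|\phi\rangle\in X\cap Y$.

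Part~2, however, has a genuine gap at the final step. The Burnside/Goursat portion is sound and does give that the algebra $\mathcal{A}$ generated by $\{E_i|_{X+Y}\}$ is $\B(X)\oplus\B(Y)$ with respect to the \emph{algebraic} direct sum. But this is a statement about a non-self-adjoint operator algebra, and it carries no information about the inner product on $X+Y$: knowing that the oblique projection $P=I_X\oplus 0_Y$ lies in $\mathcal{A}$ cannot by itself force $P$ to be Hermitian. Your proposed bridge via $\sum_i E_i^\dag E_i=I$ and the Perron--Frobenius states $\rho_X,\rho_Y$ is not spelled out, and I do not see how to make it work: the commutation $PE_i=E_iP$ on $X+Y$ dualises to $E_i^\dag P^\dag=P^\dag E_i^\dag$, which after unwinding merely re-derives the invariance of $X$ and $Y$ you began with, and the unitality of $\E^*$ gives no relation between $P$ and $P^\dag$. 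The conclusion $X\perp Y$ is a metric statement, and the Kraus algebra alone is blind to the metric.

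The paper takes a completely different, short route that does engage the inner product, through \emph{Hermitian} fixed points. By Theorem~\ref{thm:BSCC} there are minimal fixed-point states $\rho,\sigma$ with $\supp(\rho)=X$, $\supp(\sigma)=Y$; pick $\lambda>0$ large enough that in $\rho-\lambda\sigma=\Delta_+-\Delta_-$ one has $\supp(\Delta_-)=Y$ and $\supp(\Delta_+)\perp Y$. Since positive and negative parts of a fixed operator are again fixed (Lemma~\ref{lem:fpProperty}), $P_Y\rho P_Y=\lambda\sigma-\Delta_-$ is a positive fixed point supported in $Y$, hence equals $p\sigma$ for some $p\ge 0$ by the uniqueness in Theorem~\ref{thm:BSCC}. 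If $p>0$ then $Y=\supp(P_Y\rho P_Y)=\spa\{P_Y|\psi\rangle:|\psi\rangle\in X\}$ would have dimension at most $\dim X$, contradicting $\dim X<\dim Y$; so $p=0$ and $X\perp Y$. The missing idea in your proposal is precisely to work with these fixed-point states rather than with the Kraus algebra.
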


\subsection{Characterisations of BSCCs}

This subsection purports to give two characterisations of BSCCs. The first is presented in terms of reachable spaces.

\begin{lemma}\label{lem:BSCCRG1} A subspace $X$ is a BSCC of quantum Markov chain $\G$ if and only if $\RG(\phi)  = X$ for any non-zero $|\phi\>\in X$.
\end{lemma}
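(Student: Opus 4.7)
The plan is to prove both implications by exploiting the interplay between invariance of $X$ and the restricted super-operator $\E|_X$.

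First I would record a key observation: if $X$ is invariant, then for any $\rho$ with $\supp(\rho)\subseteq X$ we have $\E(\rho)\in\D(X)$, hence $\E|_X(\rho) = P_X\E(\rho)P_X = \E(\rho)$. Iterating, $(\E|_X)^i(\rho) = \E^i(\rho)$ for all $i$, so by Lemma~\ref{lem:TransitiveRelation}(2) the reachable space $\RG(\phi)$ in $\G$ agrees with the reachable space $\mathcal{R}_{\mathcal{G}_X}(\phi)$ in $\G_X$ whenever $\ket{\phi}\in X$. This single fact lets me shuttle freely between reachability in $\G$ and in the restricted chain.

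For the forward direction, suppose $X$ is a BSCC. Invariance together with the observation gives $\supp(\E^i(\phi))\subseteq X$ for all $i$, hence $\RG(\phi)\subseteq X$ by Lemma~\ref{lem:TransitiveRelation}(2). Strong connectedness of $X$ then says every non-zero $\ket{\psi}\in X$ satisfies $\ket{\psi}\in\mathcal{R}_{\mathcal{G}_X}(\phi) = \RG(\phi)$, giving the reverse inclusion $X\subseteq\RG(\phi)$.

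For the converse, assume $\RG(\phi) = X$ for every non-zero $\ket{\phi}\in X$. Invariance is immediate: $\supp(\E(\phi))\subseteq \RG(\phi) = X$ by Lemma~\ref{lem:TransitiveRelation}(2), so $\E(X)\subseteq X$. With invariance in hand, the observation converts the hypothesis into $\mathcal{R}_{\mathcal{G}_X}(\phi) = X$ for every non-zero $\ket{\phi}\in X$, which is exactly strong connectedness of $X$. The remaining subtlety is maximality: if $X\subsetneq Y$ with $Y$ strongly connected, pick any non-zero $\ket{\phi}\in X$; since $X$ is invariant under $\E$ and lies inside $Y$, the iterates $(\E|_Y)^i(\phi)$ coincide with $\E^i(\phi)$ and thus never leave $X$, so $\mathcal{R}_{\mathcal{G}_Y}(\phi)\subseteq X\subsetneq Y$, contradicting strong connectedness of $Y$. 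Hence $X$ is a BSCC.

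The main obstacle is the bookkeeping between $\E$ and its restrictions $\E|_X$, $\E|_Y$; once one verifies that invariance of $X$ makes reachability in $\G$ and in $\G_X$ agree on inputs supported in $X$, both directions reduce to short applications of Lemma~\ref{lem:TransitiveRelation}(2).
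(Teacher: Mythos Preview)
Your argument is correct and follows the same logic as the paper's proof: invariance gives $\RG(\phi)\subseteq X$ and strong connectedness gives $X\subseteq\RG(\phi)$, with your key observation that invariance makes $\E|_X$ and $\E$ agree on inputs supported in $X$ being exactly what the paper uses implicitly. The paper dismisses the sufficiency direction as ``obvious'' and in particular does not address maximality at all; your explicit verification of maximality via the contradiction with a hypothetical larger strongly connected $Y$ is a welcome addition that the paper's proof skips.
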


To present the second characterisation, we need the notion of fixed point of super-operator.

\begin{definition}\begin{enumerate}\item A nonzero partial density operator $\rho\in \mathcal{D}(\h)$ is called a fixed point state of super-operator $\E$ if $\E(\rho) = \rho$.
 \item A fixed point state $\rho$ of super-operator $\E$ is called minimal if for any fixed point state $\sigma$ of $\E$, it holds that $supp(\sigma) \subseteq supp(\rho)$ implies $\sigma=\rho$.\end{enumerate}
\end{definition}

The second characterisation of BSCCs establishes a connection between BSCCs and minimal fixed point states.

\begin{theorem}\label{thm:BSCC}
    A subspace $X$ is a BSCC of quantum Markov chain $\G=\langle\hs,\E\rangle$ if and only if there exists a minimal fixed point state $\rho$ of $\E$ such that $supp(\rho) = X$. Furthermore, $\rho$ is actually the \emph{unique} fixed point state, up to normalisation, with the support included in $X$.
\end{theorem}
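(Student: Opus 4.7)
My plan is to prove the two implications separately and then obtain the uniqueness clause as a byproduct. The existence ingredient I reuse throughout is that every trace-preserving super-operator on a finite-dimensional Hilbert space admits at least one fixed-point density operator, obtained by applying Brouwer's fixed-point theorem to the continuous map on the compact convex set of density operators.

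For the ``only if'' direction, suppose $X$ is a BSCC. Invariance of $X$ lets me view $\E|_X$ as a trace-preserving super-operator on $\D(X)$, so the Brouwer argument produces a fixed-point state $\rho\in\D(X)$. To upgrade $\supp(\rho)\subseteq X$ to equality, I combine the reachable-space formula of Lemma~\ref{lem:TransitiveRelation}(2) with strong connectedness: since $\E(\rho)=\rho$, we have $\mathcal{R}_{\mathcal{G}_X}(\rho)=\supp(\rho)$, while for any nonzero $|\phi\rangle\in\supp(\rho)\subseteq X$, strong connectedness gives $\mathcal{R}_{\mathcal{G}_X}(\phi)=X$ and transitivity (Lemma~\ref{lem:TransitiveRelation}(1)) yields $X=\mathcal{R}_{\mathcal{G}_X}(\phi)\subseteq\mathcal{R}_{\mathcal{G}_X}(\rho)=\supp(\rho)$. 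Thus $\supp(\rho)=X$. Notice that this argument in fact shows more: every fixed-point state supported in $X$ must have support exactly $X$, a fact I will reuse for uniqueness and minimality.

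For the ``if'' direction, suppose $\rho$ is a minimal fixed-point state with $\supp(\rho)=X$. Invariance of $X$ follows from an operator-domination argument: because $\rho$ is positive definite on $X$, there exists $C>0$ with $|\phi\rangle\langle\phi|\leq C\rho$ for every unit $|\phi\rangle\in X$, so positivity of $\E$ gives $\E(|\phi\rangle\langle\phi|)\leq C\E(\rho)=C\rho$, whence $\supp(\E(|\phi\rangle\langle\phi|))\subseteq X$, and taking the join over $|\phi\rangle\in X$ produces $\E(X)\subseteq X$. For strong connectedness, fix any nonzero $|\phi\rangle\in X$ and set $Y=\mathcal{R}_{\mathcal{G}}(\phi)\subseteq X$; since $Y$ is invariant, Brouwer again yields a fixed-point state $\sigma$ with $\supp(\sigma)\subseteq Y\subseteq X=\supp(\rho)$, and minimality of $\rho$ then forces $\sigma=\rho$, so $X=\supp(\rho)\subseteq Y\subseteq X$. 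Hence $\mathcal{R}_{\mathcal{G}}(\phi)=X$ for every nonzero $|\phi\rangle\in X$, and Lemma~\ref{lem:BSCCRG1} certifies that $X$ is a BSCC.

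For the uniqueness clause, consider two fixed-point states $\rho,\sigma$ with support in $X$; both have support exactly $X$ by the observation from the ``only if'' direction. I define $\lambda=\max\{t>0:\rho-t\sigma\geq 0\}$, which is finite and strictly positive because $\sigma$ is positive definite on $X$ and $\rho$ is positive with the same support. By linearity $\rho-\lambda\sigma$ is a non-negative fixed point of $\E$, and by maximality of $\lambda$ it picks up at least one additional zero eigenvalue, so its support is a proper subspace of $X$; if it were nonzero, normalising would give a fixed-point state supported in $X$ with support strictly smaller than $X$, contradicting the full-support fact already established. Therefore $\rho=\lambda\sigma$. The step I expect to be most delicate is precisely this full-support property of fixed-point states: in both directions it is where the rigidity provided by strong connectedness (forward) and minimality (backward) genuinely does the work, and the subtraction trick for uniqueness, although classical, hinges entirely on it.
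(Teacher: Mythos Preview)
Your proof is correct and follows essentially the same approach as the paper's: both directions hinge on the existence of a fixed-point state inside any invariant subspace (you via Brouwer, the paper via Lemma~\ref{lem:fpProperty}.2) together with invariance of the support of a fixed point (you via operator domination, the paper again via Lemma~\ref{lem:fpProperty}.2), and the subtraction trick for uniqueness is exactly what the paper's one-line remark ``$\lambda\rho+\gamma\sigma$ is again a fixed point'' unpacks to. The only organizational difference is that in the ``only if'' direction the paper selects a \emph{minimal} fixed-point state at the outset and then shows its support is $X$, whereas you take an arbitrary Brouwer fixed point, show its support is $X$, and recover minimality afterwards from the uniqueness clause---a reordering, not a different argument.
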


\subsection{Checking BSCCs}

We now present an algorithm that decides whether or not a given subspace is a BSCC of a quantum Markov chain (see Algorithm \ref{alg:checkBSCC}). The correctness and complexity of this algorithm are given in the following theorem.

\begin{algorithm}
    \SetKwData{Left}{left}\SetKwData{This}{this}\SetKwData{Up}{up}
    \SetKwFunction{Union}{Union}\SetKwFunction{FindCompress}{FindCompress}
    \SetKwInOut{Input}{input}\SetKwInOut{Output}{output}

    \Input{A quantum Markov chain $\G=\<\hs, \E\>$ and a subspace $X\subseteq \hs$}
    \Output{True or False indicating whether $X$ is a BSCC of $\G$}

    \Begin{
        \If{$\E(X) \not\subseteq X$}{
            \Return{False}\;
        }
        $\E' \la P_X\circ\E$\;
        $\B \la \text{a density operator basis of the set}~ \{A\in \B(\hs) : \E'(A) = A\}$;\hfill (*)\\
        \eIf{$|\B| > 1$}{
            \Return{False}\;
        }{
            $\rho \la \text{the unique element in} ~\B$\;
            \eIf{$X = \supp(\rho)$}{
                \Return{True}\;
            }{
                \Return{False}\;
            }
        }
    }
    \caption{CheckBSCC($X$)}\label{alg:checkBSCC}
\end{algorithm}

\begin{theorem}\label{thm:algcheckBSCC}
Given a quantum Markov chain $\<\hs, \E\>$ and a subspace  $X\subseteq \hs$, Algorithm~\ref{alg:checkBSCC} decides whether or not $X$ is a BSCC of $\G$ in time $O(n^6)$, where $n=\dim(\hs)$.\end{theorem}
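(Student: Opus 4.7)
My plan is to reduce the problem to Theorem~\ref{thm:BSCC}, which says $X$ is a BSCC iff there is a (necessarily unique up to scaling) fixed point state of $\E$ whose support lies in $X$, with that support equal to $X$. Step (2) of the algorithm checks invariance, which is necessary. Given invariance, I would next show that the fixed points of $\E'=P_X\circ\E\circ P_X$ in $\B(\hs)$ coincide with the fixed points of $\E$ supported on $X$: the forward direction follows from the fact that any $A$ with $\E'(A)=A$ automatically satisfies $A=P_X A P_X$, so $A\in\B(X)$; the reverse uses invariance of $X$, so that $P_X\E(A)P_X=\E(A)$ whenever $A\in\B(X)$. Hence the space of fixed points of $\E'$ is exactly the space of fixed points of $\E$ with support inside $X$. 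By Theorem~\ref{thm:BSCC}, $X$ is a BSCC iff this space is one-dimensional and is spanned by a density operator $\rho$ with $\supp(\rho)=X$, which is precisely what the two remaining branches of the algorithm verify.

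\textbf{Complexity.} I would work with the standard matrix representation of super-operators on $\B(\hs)\cong\mathbb{C}^{n^2}$, viewing $\E$ and $\E'$ as $n^2\times n^2$ complex matrices. Testing $\E(X)\subseteq X$ by applying $\E$ to a basis of $X$ and checking containment costs $O(n^6)$. Building the matrix of $\E'$ from that of $\E$ is another $O(n^6)$ composition. The dominant step is solving the linear system $(\E'-\I)(A)=0$ with $A\in\B(\hs)$ to obtain a basis of the fixed-point space; this is Gaussian elimination on an $n^2\times n^2$ system, costing $O((n^2)^3)=O(n^6)$. Converting the resulting basis of Hermitian fixed points into a density operator basis via spectral decomposition is $O(n^3)$ per vector, and the final support/rank check $X=\supp(\rho)$ is another $O(n^3)$ computation. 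All other operations are lower order, yielding an overall bound of $O(n^6)$.

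\textbf{Main obstacle.} The subtlest point is the step marked (*): justifying that a basis of density operators exists and that ``one-dimensionality of the fixed-point space plus $\supp(\rho)=X$'' exactly captures the minimality hypothesis of Theorem~\ref{thm:BSCC}. I would invoke the structure of fixed points of completely positive trace-preserving maps on an invariant subspace, which guarantees that the fixed-point space is spanned by positive operators (so a density operator basis exists and can be found in polynomial time). One-dimensionality then forces uniqueness up to normalisation of the fixed density operator, so no other fixed point state with support contained in $X$ can exist, yielding minimality and completing the bridge to Theorem~\ref{thm:BSCC}.
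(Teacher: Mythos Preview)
Your proposal is correct and follows essentially the same approach as the paper: reduce correctness to Theorem~\ref{thm:BSCC} by first checking invariance, then identifying the fixed points of $\E'$ with the fixed points of $\E$ supported in $X$, and finally reading off the $O(n^6)$ bound from Gaussian elimination on the $n^2\times n^2$ matrix representation (the paper packages this last step as Lemma~\ref{lem:algo}.2). One small slip: you write $\E'=P_X\circ\E\circ P_X$, whereas the algorithm sets $\E'=P_X\circ\E$; this is harmless since any fixed point of either map is automatically supported in $X$, so the two fixed-point sets coincide.
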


\section{Decompositions of the State Space}\label{sec:de}

A state in a classical Markov chain is transient if there is a non-zero probability that the process will never return to it, and a state is recurrent if from it the returning probability is 1.
It is well-known that a state is recurrent if and only if it belongs to some BSCC in a finite-state Markov chain,
and thus the state space of a classical Markov chain can be decomposed into the union of some BSCCs and a transient subspace~\cite{BK08}, \cite{MU05}.
The aim of this section is to prove a quantum generalisation of this result.

\begin{definition}
A subspace $X\subseteq \hs$ is transient in a quantum Markov chain $\G=\langle\hs,\E\rangle$ if
    \begin{equation*}
        \lim_{k\rightarrow \infty}\tr(P_X\E^k(\rho)) = 0
    \end{equation*}
for any $\rho \in \D(\hs)$, where $P_X$ is the projection onto $X$.
\end{definition}

The above definition is stated in a \textquotedblleft double negation" way. Intuitively, it means that the probability in a transient subspace will be eventually zero. To understand this definition better, let us recall that in a classical Markov chain, a state $s$ is said to be transient if the system starting from $s$ will eventually return to $s$ with probability less than 1. It is well-known that in a finite-state Markov chain, this is equivalent to that the probability at this state will eventually become 0. In the quantum case, the property ``eventually return'' can be hardly described without measurements, and measurements will disturb the behaviour of the systems. So, we choose to adopt the above definition.

To give a characterisation of transient subspaces, we need the notion of the asymptotic average of a super-operator $\E$, which is defined to be \begin{equation}\label{eq:Einfty}
        \E_\infty = \lim_{N\rightarrow \infty} \frac{1}{N} \sum_{n=1}^N \E^n.
    \end{equation}
It is easy to see from \cite[Proposition 6.3, Proposition 6.9]{Wolf12} that $ \E_\infty$ is a super-operator as well.

\begin{theorem}\label{thm:Flimit}
The ortho-complement of the image of the state space $\hs$ of a quantum Markov chain $\G=\langle\hs,\E\rangle$ under the asymptotic average of super-operator $\E$: $$T_\E:=\E_\infty(\hs)^\perp$$ is the largest transient subspace in $\G$; that is, any transient subspace of $\G$ is a subspace of $T_\E$.
\end{theorem}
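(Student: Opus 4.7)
The plan is to prove the two required properties of $T_\E=\E_\infty(\hs)^\perp$---namely that it is itself transient and that it contains every transient subspace---by working with the complementary subspace $F:=\E_\infty(\hs)$ together with a distinguished fixed-point state of $\E$ whose support is exactly $F$.

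First I would collect some elementary consequences of the Ces\`aro definition: from $\E_\infty=\lim_{N}\frac{1}{N}\sum_{n=1}^{N}\E^n$ one reads off $\E\circ\E_\infty=\E_\infty\circ\E=\E_\infty$ and $\E_\infty\circ\E_\infty=\E_\infty$, so every $\E_\infty(\rho)$ is a fixed point of $\E$. Setting $d=\dim\hs$ and $\sigma_0:=\E_\infty(I/d)$, which lies in $\D(\hs)$ since $\E_\infty$ is a positive, trace-preserving super-operator, I would show $\supp(\sigma_0)=F$. The inclusion $\supp(\sigma_0)\subseteq F$ is immediate from the definition of $F$; for the reverse, $I/d\geq\frac{1}{d}|\psi\rangle\langle\psi|$ together with positivity of $\E_\infty$ yields $\sigma_0\geq\frac{1}{d}\E_\infty(|\psi\rangle\langle\psi|)$, hence $\supp(\E_\infty(\psi))\subseteq\supp(\sigma_0)$; joining over all $|\psi\rangle\in\hs$ gives $F\subseteq\supp(\sigma_0)$.

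Next I would establish that $F$ is invariant in $\G$. Fixing a Kraus form $\E=\sum_k E_k\cdot E_k^\dag$ and using $\E(\sigma_0)=\sigma_0$, each positive summand satisfies $E_k\sigma_0 E_k^\dag\leq\sigma_0$, and therefore $\supp(E_k\sigma_0 E_k^\dag)=E_kF\subseteq\supp(\sigma_0)=F$. By the Kraus-level criterion for invariance recorded just before Theorem~\ref{thm:SPnondec}, this gives $\E(F)\subseteq F$. Now apply Theorem~\ref{thm:SPnondec}: for every $\rho\in\D(\hs)$ the sequence $\tr(P_F\E^k(\rho))$ is non-decreasing, so by trace preservation $a_k:=\tr(P_{F^\perp}\E^k(\rho))$ is non-increasing and non-negative, hence converges to some $a\geq 0$. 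On the other hand, the Ces\`aro averages satisfy
$$\frac{1}{N}\sum_{n=1}^{N}a_n=\tr\bigl(P_{F^\perp}\,\tfrac{1}{N}\textstyle\sum_{n=1}^{N}\E^n(\rho)\bigr)\longrightarrow\tr(P_{F^\perp}\E_\infty(\rho))=0,$$
since $\supp(\E_\infty(\rho))\subseteq F$. A non-increasing non-negative sequence whose Ces\`aro means tend to $0$ must itself tend to $0$ (because each $a_n\geq a$ forces $\frac{1}{N}\sum a_n\geq a$), so $a=0$ and $T_\E=F^\perp$ is transient.

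Finally, for maximality, let $X\subseteq\hs$ be any transient subspace. Applying the defining condition to $\rho=\sigma_0$ and using $\E^k(\sigma_0)=\sigma_0$ gives $\tr(P_X\sigma_0)=\lim_k\tr(P_X\E^k(\sigma_0))=0$; by positivity of $\sigma_0$, this forces $\sigma_0|\psi\rangle=0$ for every $|\psi\rangle\in X$, i.e.\ $X\perp\supp(\sigma_0)=F$, so $X\subseteq T_\E$. The main obstacle I anticipate is the pointwise transience step: the Ces\`aro argument only controls averages, and it is precisely the monotonicity supplied by invariance of $F$ (via Theorem~\ref{thm:SPnondec}) that upgrades Ces\`aro convergence to actual convergence---so nailing down the invariance of $F$ through the Kraus-level comparison $E_k\sigma_0 E_k^\dag\leq\sigma_0$ is the technical heart of the proof.
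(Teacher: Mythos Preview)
Your proposal is correct and follows essentially the same route as the paper: invariance of $F=\E_\infty(\hs)$ feeds into Theorem~\ref{thm:SPnondec} to get monotonicity of $\tr(P_{F^\perp}\E^k(\rho))$, the Ces\`aro limit $\tr(P_{F^\perp}\E_\infty(\rho))=0$ then forces the actual limit to vanish, and maximality comes from testing transience against the fixed point $\sigma_0=\E_\infty(I/d)$ with $\supp(\sigma_0)=F$. The only difference is expository: the paper simply asserts that $\E_\infty(\hs)$ is invariant and that $\supp(\E_\infty(I/d))=\E_\infty(\hs)$, whereas you supply explicit arguments for both (the Kraus-level comparison $E_k\sigma_0 E_k^\dag\leq\sigma_0$ for the former, the operator inequality $I/d\geq\tfrac{1}{d}|\psi\rangle\langle\psi|$ for the latter).
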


We now turn to examine the structure of the image of the state space $\hs$ under super-operator $\E$.
\begin{theorem}\label{thm:Hdec}
   Let $\mathcal{G} = \<\hs,\E\>$ be a quantum Markov chain. Then $\E_\infty(\hs)$ can be decomposed into the direct sum of some orthogonal BSCCs of $\G$.
\end{theorem}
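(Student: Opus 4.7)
The plan is to induct on $\dim\E_\infty(\hs)$, using Theorem~\ref{thm:BSCC} to identify each BSCC with the support of a minimal fixed point state and peeling them off one by one.

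\textbf{Setup.} From the Cesàro formula~(\ref{eq:Einfty}) one gets $\E\circ\E_\infty=\E_\infty$, so $\rho_{*}:=\E_\infty(I/d)$ is a fixed point state of $\E$. Because $\rho\le d\,(I/d)$ implies $\supp(\E_\infty(\rho))\subseteq\supp(\rho_{*})$ for every $\rho\in\D(\hs)$, we have $\E_\infty(\hs)=\supp(\rho_{*})$; in particular $\E_\infty(\hs)$ is invariant under $\E$, being the support of a fixed point state.

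\textbf{Induction step.} By Theorem~\ref{thm:BSCC} and a descending-chain argument on supports (possible in finite dimension), there exists a minimal fixed point state $\sigma_1$ of $\E$ with $\supp(\sigma_1)\subseteq\E_\infty(\hs)$; set $B_1:=\supp(\sigma_1)$, which is a BSCC. If $B_1=\E_\infty(\hs)$ we are done; otherwise set $R:=B_1^\perp\cap\E_\infty(\hs)$ and aim to show that $R$ itself supports a fixed point state $\tau$ with $\supp(\tau)=R$. Once this is done, $R$ is invariant (again as the support of a fixed point state), and the induction hypothesis applied to the restricted chain on $R$ produces an orthogonal BSCC decomposition of $R$, completing the induction when combined with $B_1$.

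\textbf{Main obstacle.} The crux is to exhibit the fixed point $\tau$ with exactly the right support. The plan is to invoke Theorem~\ref{thm:SPnondec} with $X=B_1$: since $\rho_{*}$ is fixed, the inequality there collapses to the equality $\tr(P_{B_1}\E(\rho_{*}))=\tr(P_{B_1}\rho_{*})$. Block-decomposing $\rho_{*}=P_{B_1}\rho_{*}P_{B_1}+\rho_1+(Y+Y^\dag)$ with $\rho_1:=P_{B_1^\perp}\rho_{*}P_{B_1^\perp}$ and $Y:=P_{B_1}\rho_{*}P_{B_1^\perp}$, the invariance of $B_1$ together with the trace-preservation condition $\sum_i E_i^\dag E_i=I$ cause the cross-term contributions to drop out and force $\tr(P_{B_1}\E(\rho_1))=0$, so the positive operator $\E(\rho_1)$ must lie in $B_1^\perp$. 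Cesàro-averaging then yields that $\tau:=\E_\infty(\rho_1)$ is a fixed point state supported inside $B_1^\perp\cap\E_\infty(\hs)=R$. The most delicate part, where I expect the main work, is to verify $\supp(\tau)=R$ rather than only $\supp(\tau)\subseteq R$: one must rule out that genuine recurrent mass of $\rho_{*}$ gets hidden in the off-diagonal block $Y+Y^\dag$ in a way that shrinks $\supp(\tau)$. The faithfulness of $\rho_{*}$ on $\E_\infty(\hs)$, combined with the uniqueness clause of Theorem~\ref{thm:BSCC} forcing the on-$B_1$ part to agree (up to scalar) with $\sigma_1$, should ensure that $\tau$ carries exactly the remaining support $R$, closing the induction.
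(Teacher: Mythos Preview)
Your overall strategy---find a BSCC inside $\E_\infty(\hs)$ via Theorem~\ref{thm:BSCC}, peel it off, and recurse on the orthogonal complement---matches the paper's. The divergence is in the peeling step. The paper proves a clean splitting lemma (Lemma~\ref{lem:FPSdirsum}): if $\rho,\sigma$ are fixed point states with $\supp(\sigma)\subsetneq\supp(\rho)$, then for sufficiently large $\lambda$ the positive part $\Delta_+$ of $\rho-\lambda\sigma$ is itself a fixed point state with $\supp(\Delta_+)=\supp(\rho)\ominus\supp(\sigma)$ exactly. This relies only on the fact that the positive and negative parts of a Hermitian fixed point are again fixed (Lemma~\ref{lem:fpProperty}.1); no block computations or appeal to trace-preservation are needed.

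Your block-decomposition route can be completed, but as written it has two gaps. First, from $\tr(P_{B_1}\E(\rho_1))=0$ you get only $\supp(\E(\rho_1))\subseteq B_1^\perp$; to conclude $\supp(\E_\infty(\rho_1))\subseteq R$ you need this for every iterate $\E^k(\rho_1)$, and your argument does not iterate without already knowing that $R$ is invariant. Second, and more seriously, the mechanism you propose for the ``delicate part''---the uniqueness clause of Theorem~\ref{thm:BSCC}---does not close it: that clause controls only fixed points supported \emph{inside} $B_1$, and says nothing about where $\E_\infty(Y+Y^\dag)$ lands, so it cannot by itself force $\supp(\tau)=R$. Both gaps are repaired by pushing your own computation one step further: writing $E_i$ in block form with respect to $B_1\oplus R\oplus T_\E$, your identity $\tr(P_{B_1}\E(\rho_1))=\tr\bigl((\sum_i C_i^\dag C_i)\,\rho_1\bigr)=0$ together with the faithfulness of $\rho_1$ on $R$ forces the Kraus blocks $R\to B_1$ to vanish. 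Hence $R$ is invariant, and then the $R$-diagonal block of the equation $\E(\rho_*)=\rho_*$ reads exactly $\E(\rho_1)=\rho_1$, so $\rho_1$ itself is the fixed point with support equal to $R$---no Ces\`aro averaging required. This is more laborious than the paper's one-line $\rho-\lambda\sigma$ trick, but it does go through.
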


Combining Theorems~\ref{thm:Flimit} and~\ref{thm:Hdec}, we see that the state space of a quantum Markov chain $\G=\langle\hs,\E\rangle$ can be decomposed into the direct sum of a transient subspace of a family of BSCCs:
    \begin{equation}\label{eq:Hdec}
        \hs = \bscc_1\oplus \cdots \oplus \bscc_u \oplus T_\E\end{equation}
where $\bscc_i$'s are orthogonal BSCCs of $\G$.
A similar decomposition was recently obtained in \cite{Rosmanis12} for a special case of $\E^2=\E$. The above decomposition holds for any super-operator $\E$ and thus considerably generalises the corresponding result in \cite{Rosmanis12}.

The BSCC and transient subspace decomposition of a classical Markov chain is unique. However, it is not the case for quantum Markov chains; a trivial example is that $\E$ is the identity operator, for which any 1-dimensional subspace of $\hs$ is a BSCC, and thus for each orthonormal basis $\{|i\>\}$ of $\hs$, $\bigoplus_{i} \spa\{|i\>\}$ is an orthogonal decomposition of $\hs$. The following is a more interesting example.

\begin{example}\label{exam:E5dec} Let quantum Markov chain $\G=\langle\e,\hs\rangle$ be given as in Example \ref{exam:E5}. Then
$\bscc_1 = \spa\{|0\>,|1\>\}$, $\bscc_2 = \spa\{|2\>,|3\>\}$, $D_1 = \spa\{|0+2\>, |1 +3\>\}$, and $D_2 = \spa\{|0-2\>,|1-3\>\}$ are BSCCs, and $T_\E= \spa\{|4\>\}$ is a transient subspace. Furthermore, we have
$$\hs = \bscc_1\oplus\bscc_2 \oplus T_\E= D_1\oplus D_2 \oplus  T_\E.$$
\end{example}

The relation between different decompositions of a quantum Markov chain is clarified by the following theorem.

\begin{theorem}\label{thm:BSCCUnitary}
    Let $\mathcal{G} = \<\hs,\E\>$ be a quantum Markov chain, and let
    \begin{equation*}
        \hs = \bscc_1\oplus \cdots \oplus \bscc_u \oplus T_\E= D_1\oplus \cdots \oplus D_v \oplus T_\E
    \end{equation*}
    be two decompositions in the form of Eq.~(\ref{eq:Hdec}), and $\bscc_i$'s and $D_i$'s are arranged, respectively, according to the increasing order of the dimensions. Then $u=v$, and $\dim(\bscc_i) = \dim(D_i)$ for each $1\leq i\leq u$.
\end{theorem}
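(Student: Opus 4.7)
The plan is to group BSCCs in each decomposition by dimension and show that the span of all BSCCs of a fixed dimension does not depend on which decomposition was chosen. For every positive integer $d$, set
\begin{equation*}
U_d := \bigoplus_{\{i\,:\,\dim \bscc_i = d\}} \bscc_i \qquad \text{and} \qquad V_d := \bigoplus_{\{j\,:\,\dim D_j = d\}} D_j.
\end{equation*}
The central claim is that $U_d = V_d$ for every $d$; once this is proved, equating dimensions and dividing by $d$ gives $|\{i : \dim \bscc_i = d\}| = |\{j : \dim D_j = d\}|$ for each $d$, which combined with the agreed sorting by increasing dimension yields $u = v$ and $\dim \bscc_i = \dim D_i$ for all $i$.

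The first step I would carry out is to locate every BSCC inside $\E_\infty(\hs)$. If $X$ is a BSCC, Theorem~\ref{thm:BSCC} furnishes a fixed point state $\rho$ of $\E$ with $\supp(\rho) = X$; since $\E^n(\rho) = \rho$ for every $n$, the definition~(\ref{eq:Einfty}) forces $\E_\infty(\rho) = \rho$, and hence $X = \supp(\rho) \subseteq \E_\infty(\hs)$. Applied to the first decomposition together with $T_\E = \E_\infty(\hs)^\perp$, this gives $\E_\infty(\hs) = \bigoplus_i \bscc_i$, so in particular every $D_j$ is contained in $\bigoplus_i \bscc_i$.

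Now the key inclusion is immediate from Lemma~\ref{lem:BSCCRG}(2): whenever $\dim \bscc_i \neq \dim D_j$ we have $\bscc_i \bot D_j$, so if $\dim D_j = d$ then $D_j$ has zero component along every $\bscc_i$ with $\dim \bscc_i \neq d$ and must therefore be contained in $U_d$. Summing over $j$ with $\dim D_j = d$ gives $V_d \subseteq U_d$, and the reverse inclusion follows by swapping the roles of the two decompositions, so $U_d = V_d$ as claimed.

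I do not anticipate any serious obstacle: Lemma~\ref{lem:BSCCRG}(2) is doing essentially all of the work, and the structural identity $\E_\infty(\hs) = \bigoplus_i \bscc_i$ is just the orthogonality built into~(\ref{eq:Hdec}) combined with Theorem~\ref{thm:Flimit}. The only subtlety worth checking explicitly is that the two decompositions share the \emph{same} transient part $T_\E$, which is guaranteed because $T_\E$ is uniquely determined by $\E$ alone as $\E_\infty(\hs)^\perp$.
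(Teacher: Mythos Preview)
Your argument is correct and, like the paper's, rests entirely on Lemma~\ref{lem:BSCCRG}(2). The paper proceeds by induction on the sorted index: assuming $\dim\bscc_i=\dim D_i$ for all $i<n$, a strict inequality $\dim\bscc_n<\dim D_n$ would force, via the orthogonality of Lemma~\ref{lem:BSCCRG}(2), the inclusion $\bigoplus_{i\le n}\bscc_i\subseteq\bigoplus_{j<n}D_j$, contradicting the dimension count. You instead bucket the BSCCs by dimension and show directly that $U_d=V_d$ for every $d$. This packaging avoids the induction and, as a pleasant byproduct, proves the slightly stronger statement that the subspace $\bigoplus_{\dim\bscc_i=d}\bscc_i$ is canonical---independent of the chosen decomposition---which the paper's inductive proof does not make explicit.
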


To conclude this section, we present an algorithm for finding a BSCC and transient subspace decomposition of a quantum Markov chain (see Algorithm~\ref{al2}).

\begin{algorithm}
    \SetKwData{Left}{left}\SetKwData{This}{this}\SetKwData{Up}{up}
    \SetKwFunction{Union}{Union}\SetKwFunction{FindCompress}{FindCompress}
    \SetKwInOut{Input}{input}\SetKwInOut{Output}{output}
    \Input{A quantum Markov chain $\G=\<\hs, \E\>$}
    \Output{A set of orthogonal BSCCs $\{\bscc_i\}$ and a transient subspace $T_\E$ such that $\hs = \bigoplus_i \bscc_i \oplus T_\E$}

    \Begin{
        $\mathcal{B} \la$ Decompose($\E_\infty(\h)$)\;
        \Return{$\mathcal{B}$, $\E_\infty(\h)^\perp$}\;
    }
    \caption{DecomposeH($\G$)}\label{al2}
\end{algorithm}

\begin{theorem}\label{thm:algdecomposition}
Given a quantum Markov chain $\<\hs, \E\>$, Algorithm~\ref{al2} decomposes the Hilbert space $\h$ into the direct sum of a family of orthogonal BSCCs and a transient subspace of $\G$ in time $O(n^8)$, where $n=\dim(\hs)$.
\end{theorem}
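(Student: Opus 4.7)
My plan is to verify correctness from the theorems of this section and then account for the running time of the two phases of \textsc{DecomposeH}. For correctness, Theorem~\ref{thm:Flimit} identifies $T_\E = \E_\infty(\hs)^\perp$ as the largest transient subspace, so the second return value is valid; Theorem~\ref{thm:Hdec} guarantees that $\E_\infty(\hs)$ itself splits as an orthogonal direct sum of BSCCs of $\G$, which is precisely what the subroutine \textsc{Decompose} (called on $\E_\infty(\hs)$) is designed to produce. Chaining these two facts immediately gives $\hs = \bigoplus_i \bscc_i \oplus T_\E$ with the structure the theorem demands.

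For the running time, Phase~1 is the computation of $\E_\infty(\hs)$. Using the vectorised representation of $\E$ as an $n^2 \times n^2$ matrix $M_\E$, the asymptotic average $\E_\infty$ is the spectral projector of $M_\E$ onto its eigenspace for the eigenvalue $1$: since $\E$ is trace-preserving and completely positive, every eigenvalue of $M_\E$ lies in the closed unit disk, and the Ces\`aro average damps every eigenvalue other than $1$ itself to zero. Computing this projector, and hence $\E_\infty(\hs)$, reduces to an eigendecomposition of an $n^2 \times n^2$ matrix, at cost $O(n^6)$. Phase~2 is \textsc{Decompose}($\E_\infty(\hs)$), which I would implement by peeling off one BSCC per iteration: on the current residual invariant subspace $Y$, search among the fixed points of $\E|_Y$ for a \emph{minimal} one $\rho$, so that $\supp(\rho)$ is a BSCC by Theorem~\ref{thm:BSCC}, verify the candidate via Algorithm~\ref{alg:checkBSCC}, record $\supp(\rho)$, and continue on $Y \cap \supp(\rho)^\perp$, which is again invariant and decomposes as a direct sum of BSCCs by Theorem~\ref{thm:Hdec}. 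Since the fixed-point space of $\E|_Y$ has dimension $O(n)$, scanning for a minimal element requires $O(n)$ invocations of the $O(n^6)$-cost BSCC test of Theorem~\ref{thm:algcheckBSCC}, giving $O(n^7)$ per round; with at most $n$ BSCCs to extract, Phase~2 contributes $O(n^8)$, absorbing Phase~1 and yielding the claimed bound.

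The main obstacle, in my view, is the Phase~2 accounting: one has to certify that a minimal fixed point state of $\E|_Y$ can always be located within $O(n)$ candidate tests (using Theorem~\ref{thm:SPnondec} to keep $Y$ invariant and Theorem~\ref{thm:BSCC} to identify BSCCs with supports of minimal fixed points), and that forming and storing each restricted super-operator $\E|_{Y \cap \supp(\rho)^\perp}$ does not inflate the bound. Once these bookkeeping points are settled, correctness is immediate from the two decomposition theorems and the $O(n^8)$ bound follows by summing the two phases.
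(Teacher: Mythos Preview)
Your correctness argument is essentially the paper's: Theorem~\ref{thm:Flimit} gives the transient piece and Theorem~\ref{thm:Hdec} (via Corollary~\ref{cor:dec}) handles the BSCC part. But your complexity accounting diverges from the paper in both phases, and Phase~2 contains a genuine error.

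\textbf{Phase 1.} The paper computes $\E_\infty$ by taking the Jordan decomposition of the $n^2\times n^2$ matrix representation $M$ (Lemma~\ref{lem:algo}.1), which it costs at $O((n^2)^4)=O(n^8)$; this is the dominant term in the paper's analysis. Your $O(n^6)$ estimate via an eigenspace projector is plausible but incomplete: you must argue that $M$ has no nontrivial Jordan blocks at the eigenvalue $1$ (otherwise the Ces\`aro limit is \emph{not} the ordinary eigenprojector). That fact is true for trace-preserving completely positive maps, but you have not justified it, and the paper does not invoke it.

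\textbf{Phase 2.} Your ``peel off one minimal fixed point at a time'' is a different subroutine from the paper's \textsc{Decompose}, which instead takes \emph{any} two density-operator fixed points $\rho_1,\rho_2$, forms the positive part of $\rho_1-\rho_2$, and recurses simultaneously on its support and on the orthocomplement (Lemma~\ref{lem:FPSdirsum}). Because dimensions strictly split, the recursion tree has at most $O(n)$ nodes, each costing $O(n^6)$, so the paper gets $O(n^7)$ for this phase --- not $O(n^8)$. Your justification for $O(n^7)$ \emph{per extracted BSCC} rests on the claim that ``the fixed-point space of $\E|_Y$ has dimension $O(n)$''; this is false. For $\E=\mathcal{I}$ the fixed-point space is all of $\B(\hs)$, of dimension $n^2$. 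What is actually bounded by $n$ is the length of a strictly descending chain of supports, so a descent (restrict to the current support, recompute the fixed-point basis, use the difference trick to shrink) finds a minimal fixed point in $O(n)$ steps of cost $O(n^6)$ each. That repairs your bound, but it is essentially the paper's \textsc{Decompose} run only down the leftmost branch, discarding work; the paper's two-sided recursion is both simpler to account for and tighter. Finally, your claim that $Y\cap\supp(\rho)^\perp$ is invariant does not follow from Theorems~\ref{thm:SPnondec} or~\ref{thm:Hdec}; you need Lemma~\ref{lem:FPSdirsum} (or equivalently Corollary~\ref{cor:dec}) to know that the orthocomplement inside $Y$ is again the support of a fixed point.
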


\section{Reachability Probabilities}\label{sec:rb}
The traditional way to define reachability probabilities in classical Markov chains is first introducing a probability measure based on cylinder sets of finite paths of states. The probability of reaching a set $T$ is then the probability measure of the set of paths which include a state from $T$. Typically, reachability probabilities can be obtained by solving a system of linear equations, which is easy and numerically efficient. In quantum Markov chains, however,
it is even not clear how to define such a probability measure. Thus it seems hopeless to extend reachability analysis to the quantum case in this way.

Fortunately, there is another way to compute the reachability probability in a classical Markov chain $\<S, P\>$. Given a set of states $T\subseteq S$, we first change the original Markov chain into a new one $\<S, P'\>$ by making states in $T$ absorbing. Then
the reachability probability of $T$ is simply the limit of the probability accumulated in $T$, when the time goes to infinity. It turns out that this equivalent definition can be extended into the quantum case as follows.
\begin{definition}
Let $ \<\hs,\E\>$ be a quantum Markov chain, $\rho\in \D(\hs)$ an initial state, and $G\subseteq \hs$ a subspace. Then the probability of reaching $G$, starting from $\rho$, can be defined as
            \begin{equation*}
                 \Pr(\rho\vDash \Diamond G)  = \lim_{i\rightarrow \infty} \tr(P_G \widetilde{\E}^i(\rho))
            \end{equation*}
where $\widetilde{\E} = P_G + \E\circ (I-P_G)$ is the super-operator which first performs the projective measurement $\{P_G, I-P_G\}$ and then applies the identity operator $\I$ or $\E$ depending on the measurement outcome.
\end{definition}
 Obviously the limit in the above definition exists, as the probabilities $\tr(P_G \widetilde{\E}^i(\rho))$ are nondecreasing in $i$.

\begin{procedure}
    \SetKwData{Left}{left}\SetKwData{This}{this}\SetKwData{Up}{up}
    \SetKwFunction{Union}{Union}\SetKwFunction{FindCompress}{FindCompress}
    \SetKwInOut{Input}{input}\SetKwInOut{Output}{output}

    \Input{A subspace $X$ which is the support of a fixed point state of $\E$}
    \Output{A set of orthogonal BSCCs $\{\bscc_i\}$ such that $X = \oplus \bscc_i$}
    \caption{Decompose($X$)}\label{alg:decomposeRho}
    \Begin{
        $\E' \la P_X\circ \E$\;
       $\B \la \text{a density operator basis of the set}~ \{A\in \B(\hs) : \E'(A) = A\}$\;
        \eIf{$|\B| = 1$}{
            $\rho\la \text{the unique element of } \B$\;
            \Return{$\{\supp(\rho)\}$}\;
        }{
            $\rho_1, \rho_2\la \text{two arbitrary elements of } \B$\;
            $\rho \la \text{positive part of} ~\rho_1-\rho_2$\;
            $Y\la \supp(\rho)^\perp$;\hfill (* the ortho-complement of $\supp(\rho)$ in $X$*)\\
            \Return{$\mathrm{Decompose}(\supp(\rho))\cup \mathrm{Decompose}(Y)$}\;
        }
    }
\end{procedure}
To compute the reachability probability, we first note the subspace $G$ is invariant under $\widetilde{\E}$. Thus $ \<G,\widetilde{\E}\>$ is again a quantum Markov chain. Since $\widetilde{\E}(I_G)=I_G$ and $\widetilde{\E}_\infty(G) = G$, we can decompose $G$ into a set of orthogonal BSCCs according to $\widetilde{\E}$ by Theorem~\ref{thm:Hdec}. The following lemma shows a connection between the limit probability of hitting a BSCC and the probability that the asymptotic average of the initial state lies in the same BSCC.

\begin{lemma}\label{lem:BSCClimit}
    Let $\{\bscc_i\}$ be a BSCC decomposition of $\E_\infty(\hs)$, and $P_{\bscc_i}$ the projection onto $\bscc_i$. Then for each $i$, we have
    \begin{equation}\label{eq:BSCClimit}
        \lim_{k\rightarrow \infty} \tr( P_{\bscc_i}\E^k(\rho) ) = \tr(P_{\bscc_i}\E_\infty (\rho))
    \end{equation} for all $\rho\in \D(\hs)$.
\end{lemma}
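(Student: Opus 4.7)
The plan is to show that the limit on the left-hand side exists as an ordinary (not just Cesàro) limit, and then to identify both sides with the same Cesàro limit.

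First I would establish convergence of the sequence $a_k := \tr(P_{\bscc_i}\E^k(\rho))$. Since $\bscc_i$ is a BSCC of $\G$, it is invariant under $\E$, so Theorem~\ref{thm:SPnondec} gives $a_{k+1} = \tr(P_{\bscc_i}\E(\E^k(\rho))) \geq \tr(P_{\bscc_i}\E^k(\rho)) = a_k$. Combined with the obvious bound $a_k \leq \tr(\E^k(\rho)) \leq 1$, the sequence $\{a_k\}$ is monotonically nondecreasing and bounded, hence converges to some limit $L_i$.

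Next I would pass to Cesàro averages. A standard fact from analysis says that if $a_k \to L_i$, then $\frac{1}{N}\sum_{n=1}^N a_n \to L_i$ as $N\to\infty$. On the other hand, by linearity of $P_{\bscc_i}\cdot$ and the trace,
\begin{equation*}
\frac{1}{N}\sum_{n=1}^N a_n \;=\; \tr\!\left(P_{\bscc_i}\cdot \frac{1}{N}\sum_{n=1}^N \E^n(\rho)\right).
\end{equation*}
By the very definition of $\E_\infty$ in Eq.~(\ref{eq:Einfty}) and continuity of the trace against a fixed operator, the right-hand side converges to $\tr(P_{\bscc_i}\E_\infty(\rho))$ as $N\to\infty$. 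Comparing the two limits of the same Cesàro sequence yields $L_i = \tr(P_{\bscc_i}\E_\infty(\rho))$, which is exactly Eq.~(\ref{eq:BSCClimit}).

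There is no real obstacle here; the only subtle point is the monotonicity step, which relies crucially on invariance of $\bscc_i$ (guaranteed by it being a BSCC, not merely an SCC) together with Theorem~\ref{thm:SPnondec}. Without monotonicity, one could only conclude the Cesàro identity, which is strictly weaker than the pointwise limit claimed. I would therefore emphasise in the write-up that invoking Theorem~\ref{thm:SPnondec} is what upgrades Cesàro convergence to ordinary convergence for this particular observable $P_{\bscc_i}$.
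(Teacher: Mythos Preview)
Your argument is correct, and the monotonicity step via Theorem~\ref{thm:SPnondec} is exactly what the paper uses as well. The difference is in how the identification with $\tr(P_{\bscc_i}\E_\infty(\rho))$ is made. You invoke the full Ces\`aro theorem (ordinary convergence implies Ces\`aro convergence to the same limit) and conclude equality for each $i$ separately. The paper instead only extracts the inequality $\tr(P_{\bscc_i}\E_\infty(\rho))\leq q_i$ from monotonicity (as in the proof of Theorem~\ref{thm:Flimit}), and then sums over all $i$: since $\sum_i \tr(P_{\bscc_i}\E_\infty(\rho))=1$ (the support of $\E_\infty(\rho)$ lies in $\E_\infty(\hs)$) and $\sum_i q_i = \lim_k \tr((I-P_{T_\E})\E^k(\rho))=1$ (the transient part vanishes), all inequalities are forced to be equalities. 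Your route is more self-contained, needing neither the full BSCC decomposition nor the transience of $T_\E$; the paper's route keeps the argument parallel to Theorem~\ref{thm:Flimit} and exploits the global decomposition already in hand.
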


Lemma~\ref{lem:BSCClimit} and Theorem~\ref{thm:Hdec} together give us an efficient way to compute the reachability probability from a quantum state to a subspace.

\begin{theorem}\label{thm:reach}
Let $ \<\hs,\E\>$ be a quantum Markov chain, $\rho\in \D(\hs)$, and $G\subseteq \hs$ a subspace. Then
            \begin{equation*}
                 \Pr(\rho\vDash \Diamond G)  =  \tr(P_{G}\widetilde{\E}_\infty (\rho)),
            \end{equation*}
and this probability can be computed in time $O(n^8)$ where $n=\dim(\h)$.
\end{theorem}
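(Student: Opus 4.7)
My plan is to establish the identity $\Pr(\rho\vDash\Diamond G)=\tr(P_G\widetilde{\E}_\infty(\rho))$ via a Cesaro-averaging argument exploiting the invariance of $G$ under $\widetilde{\E}$, and then to realise the probability computationally by reusing the BSCC-decomposition procedure Algorithm~\ref{al2}. The identity should fall out cleanly; the main obstacle I anticipate is keeping the computation within the claimed $O(n^8)$ budget, since Algorithm~\ref{al2} directly delivers only the BSCC decomposition of $\widetilde{\E}_\infty(\hs)$ together with the minimal fixed-point states, and we still need to convert this into the scalar $\tr(P_G\widetilde{\E}_\infty(\rho))$ without blowing through that budget.

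For the identity, I would first observe that every state in $\D(G)$ is actually a fixed point of $\widetilde{\E}$: for $\rho\in\D(G)$ one has $(I-P_G)\rho(I-P_G)=0$, so $\widetilde{\E}(\rho)=P_G\rho P_G=\rho$. In particular $G$ is invariant under $\widetilde{\E}$, so Theorem~\ref{thm:SPnondec} gives that $a_k:=\tr(P_G\widetilde{\E}^k(\rho))$ is non-decreasing in $k$; being bounded above by $1$, it converges to $\Pr(\rho\vDash\Diamond G)$. A convergent real sequence has the same Cesaro limit as its ordinary limit, and $\widetilde{\E}_\infty=\lim_{N\to\infty}\frac{1}{N}\sum_{k=1}^{N}\widetilde{\E}^k$ is a well-defined super-operator by \cite{Wolf12}, so linearity and continuity of the trace yield
\begin{equation*}
\Pr(\rho\vDash\Diamond G)=\lim_{N\to\infty}\frac{1}{N}\sum_{k=1}^{N}\tr(P_G\widetilde{\E}^k(\rho))=\tr(P_G\widetilde{\E}_\infty(\rho)).
\end{equation*}
As a BSCC-flavoured alternative, I could note that $\widetilde{\E}(I_G)=I_G$ forces $G\subseteq\widetilde{\E}_\infty(\hs)$, split $G$ into a direct sum of orthogonal BSCCs of $\widetilde{\E}$, apply Lemma~\ref{lem:BSCClimit} on each, and sum to recover the same equality.

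For complexity, I would assemble the Kraus operators of $\widetilde{\E}$, namely $P_G$ together with $\{E_i(I-P_G)\}_i$ where $\{E_i\}_i$ are Kraus operators of $\E$, in $O(n^3)$ per operator, and then feed $\<\hs,\widetilde{\E}\>$ into Algorithm~\ref{al2}, which by Theorem~\ref{thm:algdecomposition} runs in time $O(n^8)$ and delivers an orthogonal BSCC decomposition of $\widetilde{\E}_\infty(\hs)$ along with a minimal fixed-point state supporting each BSCC. From this output, $\widetilde{\E}_\infty(\rho)$ can be recovered by expanding its recurrent image in the basis of fixed-point matrices returned by the decomposition, which amounts to solving an $O(n^2)\times O(n^2)$ linear system in $O(n^6)$ time, after which $\tr(P_G\widetilde{\E}_\infty(\rho))$ is a single $O(n^3)$ inner product. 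The overall cost is dominated by the $O(n^8)$ decomposition step, matching the stated bound.
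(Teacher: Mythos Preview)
Your Ces\`aro argument for the identity is correct and actually cleaner than the paper's proof. The paper establishes the equality by first decomposing $G$ into orthogonal BSCCs of $\widetilde{\E}$ (using that $\widetilde{\E}(I_G)=I_G$ forces $G\subseteq\widetilde{\E}_\infty(\hs)$, together with Theorem~\ref{thm:Hdec}) and then summing the conclusion of Lemma~\ref{lem:BSCClimit} over those pieces. Your observation that every $\sigma\in\D(G)$ is already a fixed point of $\widetilde{\E}$, combined with monotonicity from Theorem~\ref{thm:SPnondec} and the fact that an ordinary limit coincides with its Ces\`aro limit, bypasses the BSCC machinery entirely. The alternative you sketch via Lemma~\ref{lem:BSCClimit} is essentially the paper's argument.

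The complexity argument, however, has a genuine gap. The output of Algorithm~\ref{al2} on $\langle\hs,\widetilde{\E}\rangle$ is only the list of orthogonal BSCCs and the transient subspace; even granting access to the minimal fixed-point states $\rho_i$ computed internally, these do \emph{not} span the fixed-point space of $\widetilde{\E}$. Indeed, since every operator supported on $G$ is fixed by $\widetilde{\E}$, that fixed-point space contains all of $\B(G)$, which has dimension $(\dim G)^2$, whereas an orthogonal BSCC decomposition of $G$ yields only $\dim G$ rank-one states. So there is no linear system in the ``basis of fixed-point matrices returned by the decomposition'' that recovers $\widetilde{\E}_\infty(\rho)$; knowing the fixed-point space alone does not determine the projection $\widetilde{\E}_\infty$. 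The paper avoids this detour and simply invokes Lemma~\ref{lem:algo}.1: compute the matrix representation of $\widetilde{\E}_\infty$ by Jordan-decomposing the $n^2\times n^2$ matrix of $\widetilde{\E}$ in time $O(n^8)$, then read off $\widetilde{\E}_\infty(\rho)$ and take the trace against $P_G$. Incidentally, Algorithm~\ref{al2} itself begins by computing exactly this matrix in order to obtain $\widetilde{\E}_\infty(\hs)$, so the information you need is produced along the way --- just not via the mechanism you describe.
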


Our next results assert that if a quantum Markov chain starts from a pure state in a BSCC then its evolution sequence $\psi,\E(\psi),\E^2(\psi),\cdots$ will hit a subspace with non-zero probability infinitely often provided $X$ is not orthogonal to that BSCC. They establishes indeed a certain fairness and thus can be seen as quantum generalisations of Theorems 10.25 and 10.27 in~\cite{BK08}. It is well-known that in the quantum world a measurement will change the state of the measured system. Consequently, fairness naturally splits into two different versions in quantum Markov chains.
\begin{lemma}\label{lem:weakFairness} (Measure-once fairness)
 Let $\bscc$ be a BSCC of quantum Markov chain $\G=\langle\hs,\E\rangle$, and $X$ a subspace which is not orthogonal to $\bscc$. Then for any  $|\psi\>\in\bscc$, it holds that
$\tr(P_X\E^i(\psi)) > 0$ for infinitely many $i$.
\end{lemma}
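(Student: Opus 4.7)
The plan is to prove the contrapositive via the asymptotic average $\E_\infty$. First, I would observe that since $\bscc$ is a BSCC (hence invariant), all iterates $\E^i(\psi)$ remain in $\bscc$ for $|\psi\>\in\bscc$, and therefore the averaged state $\E_\infty(\psi) = \lim_{N\to\infty} \frac{1}{N}\sum_{n=1}^N \E^n(\psi)$ is also supported in $\bscc$. Moreover, a short computation shows $\E\circ \E_\infty = \E_\infty$, so $\E_\infty(\psi)$ is a fixed point state of $\E$ (it is trace-one since $\E$, and hence each $\E^n$, is trace-preserving).

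Next I would invoke the uniqueness part of Theorem~\ref{thm:BSCC}: because $\bscc$ is a BSCC, there is, up to normalisation, a \emph{unique} fixed point state of $\E$ with support contained in $\bscc$, and that support equals $\bscc$ itself. Since $\E_\infty(\psi)$ is a fixed point state supported in $\bscc$, it must coincide with this unique minimal fixed point state, giving
\begin{equation*}
\supp(\E_\infty(\psi)) = \bscc.
\end{equation*}

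The final step is a Ces\`aro-average argument. Suppose for contradiction that $\tr(P_X\E^i(\psi))>0$ for only finitely many $i$, say for $i < N_0$. Since each term $\tr(P_X\E^n(\psi))\in[0,1]$, the linearity of trace and projection gives
\begin{equation*}
\tr(P_X\E_\infty(\psi)) = \lim_{N\to\infty}\frac{1}{N}\sum_{n=1}^N \tr(P_X\E^n(\psi)) \leq \lim_{N\to\infty}\frac{N_0}{N}=0.
\end{equation*}
Hence $\supp(\E_\infty(\psi))\subseteq X^\perp$, i.e.\ $\bscc\subseteq X^\perp$, contradicting the assumption that $X$ is not orthogonal to $\bscc$.

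The main obstacle, I expect, is the identification $\supp(\E_\infty(\psi)) = \bscc$: it hinges on the uniqueness clause of Theorem~\ref{thm:BSCC} together with the fact that $\E_\infty(\psi)$ is indeed a genuine (trace-one) fixed point. Once this is in hand, the Ces\`aro-average contradiction is essentially immediate, because only the \emph{finitely many} non-zero terms $\tr(P_X\E^i(\psi))$ would contribute, and their average must vanish in the limit.
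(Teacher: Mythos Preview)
Your proof is correct, but it follows a different route from the paper's. The paper argues directly via the reachable-space characterisation of BSCCs (Lemma~\ref{lem:BSCCRG1}): choosing $|\phi\>\in\bscc$ with $P_X|\phi\>\neq 0$, if $\tr(P_X\E^k(\psi))=0$ for all $k>N$ then every $\supp(\E^k(\psi))$ with $k>N$ lies in $X^\perp$, so $\RG(\E^{N+1}(\psi))\subseteq X^\perp$ cannot contain $|\phi\>$, contradicting that $\RG$ of any nonzero state in a BSCC equals the whole BSCC. Your argument instead goes through the fixed-point characterisation (Theorem~\ref{thm:BSCC}) together with the Ces\`aro average $\E_\infty$: you pin down $\supp(\E_\infty(\psi))=\bscc$ via uniqueness of the fixed point state supported in $\bscc$, and then finitely many nonzero terms force $\tr(P_X\E_\infty(\psi))=0$, hence $\bscc\subseteq X^\perp$. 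The paper's route is a touch more elementary in that it avoids appealing to the existence and properties of $\E_\infty$; your route is arguably cleaner once that machinery is in hand, and it makes explicit the stronger fact $\supp(\E_\infty(\psi))=\bscc$, which is of independent interest.
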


\begin{lemma}\label{lem:strongFairness} (Measure-many fairness)
 Let $\bscc$ be a BSCC of a quantum Markov chain $\G=\langle\hs,\E\rangle$,
 and $X\subseteq B$ a subspace of $B$.
 Then for any $|\psi\>\in\bscc$, we have $$\lim_{i\rightarrow \infty} \tr(\widetilde{\E}^i(\psi)) = 0,$$
    where $\widetilde{\E} =P_{X^\perp}\circ\E$, and  $X^\perp$ is the ortho-complement of $X$ in $\h$.
\end{lemma}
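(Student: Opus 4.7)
The plan is to argue by contradiction, using the characterisation of a BSCC via the uniqueness of its invariant state (Theorem~\ref{thm:BSCC}). First I would observe that since $\bscc$ is invariant under $\E$ and $X\subseteq \bscc$, the super-operator $\widetilde{\E}=P_{X^\perp}\circ\E$ maps $\D(\bscc)$ into $\D(Y_0)$, where $Y_0=X^\perp\cap \bscc$ is the orthogonal complement of $X$ taken inside $\bscc$. Hence each $\widetilde{\E}^i(\psi)$ with $i\geq 1$ is a partial density operator supported on $Y_0$, and the trace sequence $a_i=\tr(\widetilde{\E}^i(\psi))$ is monotonically non-increasing in $[0,1]$, so it converges to some $L\geq 0$. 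The goal is to show $L=0$, and I would suppose $L>0$ for contradiction.

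By compactness of the set of partial density operators on $Y_0$ with trace at most $1$, I extract a subsequence along which $\widetilde{\E}^{i_k}(\psi)\to \sigma$ with $\tr(\sigma)=L$ and $\supp(\sigma)\subseteq Y_0$. The key point is that for every fixed $n\geq 0$,
\begin{equation*}
\tr(\widetilde{\E}^n(\sigma))=\lim_{k\to\infty}\tr(\widetilde{\E}^{i_k+n}(\psi))=L,
\end{equation*}
so no trace is lost under any iterate of $\widetilde{\E}$ applied to $\sigma$. I then form the asymptotic average $\tau=\widetilde{\E}_\infty(\sigma)$ using the same construction as in Eq.~(\ref{eq:Einfty}); the limit exists because $\widetilde{\E}$ is trace-contractive in finite dimensions. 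By construction $\tau$ is a non-zero positive operator with $\tr(\tau)=L$, support still contained in $Y_0$, and it satisfies $\widetilde{\E}(\tau)=\tau$.

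The crux is then to promote $\tau$ from a fixed point of $\widetilde{\E}$ to a fixed point of $\E$ itself. Taking the trace of $\widetilde{\E}(\tau)=\tau$ and using that $\E$ is trace-preserving gives $\tr(P_X\E(\tau))=0$, so $\supp(\E(\tau))\subseteq X^\perp$. Combined with the invariance $\supp(\E(\tau))\subseteq \bscc$, this forces $\E(\tau)=P_{X^\perp}\E(\tau)P_{X^\perp}=\widetilde{\E}(\tau)=\tau$. Thus $\tau$ is a non-zero fixed point state of $\E$ whose support sits inside $Y_0$, a strict subspace of $\bscc$ under the implicit hypothesis $X\neq\{0\}$ (without which the lemma is false). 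But Theorem~\ref{thm:BSCC} asserts that the fixed point state of $\E$ with support contained in the BSCC $\bscc$ is unique up to normalisation and in fact has full support $\bscc$, a contradiction. The step I expect to be the main obstacle is exactly this upgrade from a $\widetilde{\E}$-fixed point to an $\E$-fixed point, since it is what unlocks the BSCC characterisation; a secondary technical worry is the existence of the Cesàro average for the sub-stochastic map $\widetilde{\E}$, but this is standard in finite dimensions.
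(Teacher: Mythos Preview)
Your argument is correct and follows essentially the same route as the paper: produce a nonzero fixed point of $\widetilde{\E}$ supported in $X^\perp\cap\bscc$, upgrade it to a fixed point of $\E$ via trace preservation, and contradict Theorem~\ref{thm:BSCC}. The only difference is that the paper works directly with $\rho_\psi=\widetilde{\E}_\infty(\psi)$ and shows it must vanish, whereas you first pass to a subsequential limit $\sigma$ by compactness before averaging; this extra step is harmless but unnecessary, since the Ces\`aro limit of $\psi$ already gives the required fixed point. Your explicit remark that the statement needs $X\neq\{0\}$ is a point the paper leaves implicit.
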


Lemma~\ref{lem:strongFairness} is stated also in a  \textquotedblleft double negation" way. To best understand it, let us assume that at each step after $\E$ is applied, we perform a projective measurement $\{P_X, P_{X^\bot}\}$. If the outcome corresponding to $P_X$ is observed, the process terminates immediately; otherwise, it continues with another round of applying $\E$. Lemma~\ref{lem:strongFairness} asserts that the probability of nontermination is asymptotically 0; in other words, if we set $X$ as an absorbing boundary, which is included in BSCC $B$, the reachability probability will be absorbed eventually. This lemma is indeed a strong version of fairness. Furthermore, we have:

\begin{theorem}\label{thm:Fairness}
Let $\G=\langle\hs,\E\rangle$ be a quantum Markov chain, and let $X$ be a subspace of $\hs$, and $\widetilde{\E} =P_{X^\perp}\circ\E$. Then the following two statements are equivalent: \begin{enumerate}\item The subspace $X^\bot$ contains no BSCC; \item For any $\rho\in\D(\h)$, we have $$\lim_{i\rightarrow \infty} \tr(\widetilde{\E}^i(\rho)) = 0.$$\end{enumerate}\end{theorem}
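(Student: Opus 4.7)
The plan is to prove both implications via contrapositive. For $\neg(1)\Rightarrow\neg(2)$, suppose $X^\perp$ contains a BSCC $B$ and pick any nonzero $\rho\in\D(B)$. Since $B$ is invariant under $\E$, we have $\E(\rho)\in\D(B)\subseteq\D(X^\perp)$, so $\widetilde{\E}(\rho)=P_{X^\perp}\E(\rho)P_{X^\perp}=\E(\rho)$; iterating gives $\tr(\widetilde{\E}^i(\rho))=\tr(\rho)>0$ for every $i$, making the limit in (2) strictly positive.

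For the main direction $\neg(2)\Rightarrow\neg(1)$, suppose some $\rho$ satisfies $L:=\lim_i\tr(\widetilde{\E}^i(\rho))>0$. My strategy is to distil from $\rho$ a fixed-point state of $\E$ whose support sits inside $X^\perp$ and then pull a BSCC out of that support by Theorem~\ref{thm:Hdec}. Because $\widetilde{\E}$ is trace-nonincreasing (each application merely discards the projection on $X$), the sequence $\tr(\widetilde{\E}^i(\rho))$ is monotone, so its Cesaro averages also tend to $L$. Setting $\widetilde{\E}_\infty:=\lim_N N^{-1}\sum_{n=1}^N\widetilde{\E}^n$ in analogy with (\ref{eq:Einfty}) then yields $\tr(\widetilde{\E}_\infty(\rho))=L>0$, and the standard telescoping identity $\widetilde{\E}\circ\widetilde{\E}_\infty=\widetilde{\E}_\infty$ shows that $\sigma:=\widetilde{\E}_\infty(\rho)/L$ is a density operator fixed by $\widetilde{\E}$. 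Since every $\widetilde{\E}^n$ with $n\geq 1$ has image in $X^\perp$, also $\supp(\sigma)\subseteq X^\perp$.

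The crucial step is to upgrade $\sigma$ from a fixed point of $\widetilde{\E}$ to one of $\E$ itself. Using $P_{X^\perp}\sigma P_{X^\perp}=\sigma$ and trace-preservation of $\E$, one computes $\tr(P_{X^\perp}\E(\sigma))=\tr(\widetilde{\E}(\sigma))=\tr(\sigma)=\tr(\E(\sigma))$, so $\tr(P_X\E(\sigma))=0$; positivity of $\E(\sigma)$ then forces $P_X\E(\sigma)=0$, yielding $\E(\sigma)=P_{X^\perp}\E(\sigma)P_{X^\perp}=\widetilde{\E}(\sigma)=\sigma$. Spectrally decomposing $\sigma=\sum_k p_k\op{\psi_k}{\psi_k}$ with $p_k>0$, the identity $\sum_k p_k\E(\psi_k)=\sigma$ together with positivity of each summand gives $\supp(\E(\psi_k))\subseteq\supp(\sigma)$, and linearity of the Kraus operators then yields $\E(\supp(\sigma))\subseteq\supp(\sigma)$. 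Hence $\langle\supp(\sigma),\E|_{\supp(\sigma)}\rangle$ is itself a trace-preserving qMC with $\sigma$ as a fixed state, so Theorem~\ref{thm:Hdec} applied to it produces at least one BSCC $B\subseteq\supp(\sigma)\subseteq X^\perp$; since $\supp(\sigma)$ is $\E$-invariant and the restricted super-operator agrees with $\E$ on operators supported in $\supp(\sigma)$, such a $B$ is also a BSCC of $\G$, contradicting (1). The main obstacle is this upgrade step: it exploits both the specific form $\widetilde{\E}=P_{X^\perp}\circ\E$ and the trace-preservation of $\E$, and would break down for a generic trace-nonincreasing super-operator.
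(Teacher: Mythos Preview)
Your proof is correct and follows the same route the paper indicates (its own proof is merely ``similar to the proof of Lemma~\ref{lem:strongFairness}''): form the Ces\`aro limit $\widetilde{\E}_\infty(\rho)$, show it is a nonzero fixed point of $\widetilde{\E}$ supported in $X^\perp$, upgrade it to a fixed point of $\E$ via trace-preservation, and then extract a BSCC inside its support. Your final extraction step via a restricted qMC and Theorem~\ref{thm:Hdec} is slightly more roundabout than invoking Corollary~\ref{cor:dec} (or a minimal fixed point plus Theorem~\ref{thm:BSCC}) directly on $\sigma$, but it is equivalent.
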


It is worth noting that in Theorem \ref{thm:Fairness}, $X$ is not required to be a subspace of a BSCC $\bscc$. The following two examples give some simple applications of Theorem \ref{thm:Fairness}.

\begin{example}
    Consider a quantum walk on an $n$-size cycle~\cite{Am03}. The state space of the whole system is $\hs = \hs_p\otimes\hs_c$, where $\hs_p = \spa\{|0\>,\cdots,|n-1\>\}$ is the position space, and $\hs_c = \spa \{|0\>,|1\>\}$ is the coin space. The evolution of the systems is described by a unitary transformation $U = S(I\otimes H)$, where the coin operator $H$ is the Hadamard operator, and the shift operator
$$S = \sum_{i=0}^{n-1} \left(|i+1\>\<i|\otimes |0\>\<0|+|i-1\>\<i|\otimes |1\>\<1|\right)$$
where the arithmetic operations over the index set are understood as modulo $n$.
If we set absorbing boundaries at position $0$, then from any initial state $|\psi\>$, we know from Theorem~\ref{thm:Fairness} that the probability of nontermination is asymptotically $0$ because there is no BSCC which is orthogonal to the absorbing boundaries.
\end{example}

\begin{example}
    Consider the quantum Markov chain in Example \ref{exam:E5}. Let $\rho_0$ be the initial state, and assume that projective measurement $\{P_0 = |0\>\<0|, P_1 = I-P_0\}$ will be performed at the end of each step and $P_0$ is set as the absorbing boundary. We write $\tilde{\rho}_k = \tilde{\E}^k(\rho_0)$ for the partial density operator after $k$ steps, where $\tilde{\E}=P_1\circ\E.$
    \begin{enumerate}
        \item If $\rho_0 = |1\>\<1|$, then $\lim_{k\rightarrow \infty}\tilde{\rho}_k = 0$. This means the probability will be eventually absorbed.
        \item If $\rho_0 = |2\>\<2|$, then $\lim_{k\rightarrow \infty}\tilde{\rho}_k = (|2\>\<2|+|3\>\<3|)/2$. No probability is absorbed. Let $D_1$ and $D_2$ be as in Example~\ref{exam:E5dec}. Then the probabilities in $D_1$ and $D_2$ are both 0.5. This means that if $\supp(P_0)$ is not totally in a BSCC $D$, then the probability in $D$ may not be absorbed.
    \end{enumerate}
\end{example}

\section{Repeated Reachability and Persistence Probabilities}\label{sec:pers}

In this section, we consider how to compute two kinds of reachability probabilities, namely \textquotedblleft repeated reachability" and \textquotedblleft persistence property", in a quantum Markov chain.
Note that $\E_\infty(\hs)^\perp$ is a transient subspace. We can focus our attention on $\E_\infty(\hs)$.

\begin{definition} Let $\G=\langle\hs,\E\rangle$ be a quantum Markov chain and $G$ a subspace of $\E_\infty(\hs)$.
\begin{enumerate}
\item The set of states in $\E_\infty(\hs)$ satisfying the repeated reachability ``\emph{infinitely often} reaching $G$" is
\begin{equation*}
     \mathcal{X}(G) =  \{|\psi\>\in \E_\infty(\hs): \lim_{k\ra \infty}\tr((P_{G^\perp}\circ\E)^k(\psi))=0 \}.
\end{equation*}
\item The set of states in $\E_\infty(\hs)$ satisfying the persistence property ``\emph{eventually always} in $X$" is
\begin{equation*}
     \mathcal{Y}(G) = \{ |\psi\>\in \E_\infty(\hs) : (\exists N\geq 0)(\forall k\geq N)\ \supp(\E^k(\psi)) \subseteq G\}.
\end{equation*}\end{enumerate}
\end{definition}

The set $\mathcal{X}(G)$ is defined again in a \textquotedblleft double negation" way. Its intuitive meaning can be understood as follows: if the process starts in a state in $\mathcal{X}(G)$ and we make $G$ absorbing, then the probability will be eventually absorbed by $G$.

The following theorem gives a characterisation of $\mathcal{X}(G)$ and $\mathcal{Y}(G)$ and also clarifies the relationship between them.
\begin{theorem}\label{thm:xGyG} For any subspace $G$ of $\E_\infty(\hs)$, both $\mathcal{X}(G)$ and $\mathcal{Y}(G)$ are subspaces of $\hs$. Furthermore, we have
    \begin{equation*}
       \mathcal{X}(G) =  \E_\infty(G), \hspace{2em} \mathcal{Y}(G) =\bigvee_{\bscc\subseteq G} \bscc =  \mathcal{X}(G^\perp)^\perp,
    \end{equation*}
where $\bscc$ ranges over all BSCCs, and the orthogonal complements are taken in $\E_\infty(\hs)$. Moreover, both $\mathcal{X}(G)$ and $\mathcal{Y}(G)$ are invariant.
\end{theorem}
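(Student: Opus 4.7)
The plan is to establish the three identities in sequence, using Theorem~\ref{thm:Hdec} for the BSCC decomposition of $\E_\infty(\hs)$, Theorem~\ref{thm:Fairness} to characterise eventual absorption, and Theorem~\ref{thm:SPnondec} to track mass in invariant subspaces.

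I would first tackle $\mathcal{Y}(G) = \bigvee_{\bscc \subseteq G} \bscc$. The $\supseteq$ direction is immediate: if $|\psi\> \in \bscc \subseteq G$, invariance of $\bscc$ yields $\supp(\E^k(\psi)) \subseteq \bscc \subseteq G$ for every $k$. For the reverse inclusion, I would first prove an auxiliary lemma stating that the support of any fixed-point state of $\E$ is a direct sum of BSCCs of $\G$: this support is invariant under $\E$, and the restricted chain on it admits a full-support fixed point, forcing the transient component (from Theorem~\ref{thm:Hdec}) to vanish. Given $|\psi\> \in \mathcal{Y}(G)$, averaging $\supp(\E^k(\psi)) \subseteq G$ for $k \geq N$ yields $\supp(\E_\infty(\psi)) \subseteq G$; the lemma expresses this support as a direct sum of BSCCs all contained in $G$, and decomposing $|\psi\>$ in an aligned BSCC basis of $\E_\infty(\hs)$ places it inside the same join.

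For $\mathcal{X}(G) = \E_\infty(G)$, set $Y = \E_\infty(G)$. Then $Y$ is invariant (being spanned by supports of fixed-point states of $\E$), and $G \subseteq Y$ since each $|\phi\> \in G$ lies in $\supp(\E_\infty(\phi))$ by the auxiliary lemma. Hence $P_{G^\perp}$ preserves $Y$, so $\sigma_k = (P_{G^\perp} \circ \E)^k(\psi)$ stays in $\D(Y)$ for every $\psi \in Y$. Applying Theorem~\ref{thm:Fairness} to the restricted chain $(Y, \E|_Y)$ with subspace $G$, the inclusion $Y \subseteq \mathcal{X}(G)$ reduces to showing that $Y \ominus G$ contains no BSCC of $(Y, \E|_Y)$: using the characterisation $Y = \bigvee_{n \geq 0} \E^n(G)$ as the smallest invariant subspace containing $G$, any such BSCC $\bscc$ would make $Y \ominus \bscc$ a strictly smaller invariant subspace still containing $G$, a contradiction. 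For $\mathcal{X}(G) \subseteq Y$, I argue by contrapositive: for $|\psi\> \in \E_\infty(\hs)$ with $|\psi\> \notin Y$, pick a BSCC decomposition of $\E_\infty(\hs)$ aligned with $Y$ so that $Y = \bigoplus_{j \in S} \bscc_j$ and $\E_\infty(\hs) \ominus Y = \bigoplus_{j \notin S} \bscc_j$ with each $\bscc_j$ ($j \notin S$) orthogonal to $G$; then $|\psi\>$ has a nonzero component $\psi_j$ in some $\bscc_j \subseteq G^\perp$, and Theorem~\ref{thm:SPnondec} combined with the identity $P_{G^\perp} P_{\bscc_j} P_{G^\perp} = P_{\bscc_j}$ gives $\tr(P_{\bscc_j} \sigma_k) \geq \|\psi_j\|^2 > 0$ for all $k$.

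The identity $\mathcal{Y}(G) = \mathcal{X}(G^\perp)^\perp$ follows by applying the second identity to $G^\perp$ and taking ortho-complements in $\E_\infty(\hs)$, which matches the characterisation from the first identity. Invariance of both $\mathcal{X}(G)$ and $\mathcal{Y}(G)$ is then immediate from their expressions as direct sums of invariant subspaces. The main obstacle I anticipate is the alignment claim used in Step~2: given invariant $Y \subseteq \E_\infty(\hs)$, there exists a BSCC decomposition of $\E_\infty(\hs)$ in which both $Y$ and $\E_\infty(\hs) \ominus Y$ are unions of BSCCs, with the complementary BSCCs being orthogonal to $G$. This requires showing that every invariant subspace of the ``no-transient'' chain $\E_\infty(\hs)$ is a direct sum of BSCCs and that orthogonal complements inherit the same property -- likely via a dual (Heisenberg-picture) analysis of the fixed-point algebra of $\E$ and careful application of Theorem~\ref{thm:BSCC}.
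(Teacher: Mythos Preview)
Your plan is sound and draws on the same machinery as the paper, but the organisation differs. The paper first proves $\mathcal{Y}(G)=\bigvee_{B\subseteq G}B$ (its Claims~1--2), then establishes $\mathcal{X}(G)=\mathcal{Y}(G^\perp)^\perp$ directly (Claims~3--4), and only afterwards identifies $\bigvee_{B\subseteq G}B$ with $\E_\infty(G^\perp)^\perp$ (Claims~5--6) using an auxiliary conservation law, Lemma~\ref{lem:FsubInv}, which says $\tr(P_S\E^k(\rho))$ is constant in $k$ for any invariant $S\subseteq\E_\infty(\hs)$ and any $\rho$ supported there. You instead attack $\mathcal{X}(G)=\E_\infty(G)$ head-on via Theorem~\ref{thm:Fairness} on the restricted chain $\langle \E_\infty(G),\E\rangle$; this is a legitimate alternative and arguably more direct. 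Your alignment claim is also easier than you fear: since $\E_\infty(G)=\supp(\E_\infty(I_G))$, Lemma~\ref{lem:FPSdirsum} applied to $\E_\infty(I_\hs)$ and $\E_\infty(I_G)$ already yields that both $\E_\infty(G)$ and its ortho-complement in $\E_\infty(\hs)$ are fixed-point supports, hence BSCC-decomposable by Corollary~\ref{cor:dec}. No Heisenberg-picture analysis of the fixed-point algebra is needed.

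There is, however, one genuine gap. Your third step asserts that $\mathcal{Y}(G)=\mathcal{X}(G^\perp)^\perp$ ``matches the characterisation from the first identity'', but what is actually required there is $\bigvee_{B\subseteq G}B=\E_\infty(G^\perp)^\perp$, and neither of your first two identities supplies it. One inclusion does follow from your alignment claim (the BSCCs making up $\E_\infty(G^\perp)^\perp$ are orthogonal to $G^\perp$, hence lie in $G$), but the reverse --- that every BSCC $B\subseteq G$ is orthogonal to $\E_\infty(G^\perp)$ --- needs precisely the conservation law Lemma~\ref{lem:FsubInv} applied to $I_{G^\perp}$. The same law (or equivalently Lemma~\ref{lem:BSCClimit} together with Theorem~\ref{thm:SPnondec}, which is the paper's route in Claim~2) is also what underwrites your claims that ``each $|\phi\>\in G$ lies in $\supp(\E_\infty(\phi))$'' and that $|\psi\>\in\mathcal{Y}(G)$ lands in $\supp(\E_\infty(\psi))$; your auxiliary lemma about fixed-point supports being BSCC-decomposable does not by itself give this containment.
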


\begin{example}\label{exam:E5xGyGcal}
Let us revisit Example \ref{exam:E5} where $\E_\infty(\hs) = \spa\{|0\>,|1\>,|2\>,|3\>\}$.    \begin{enumerate}
    \item If $G=\spa\{|0\>,|1\>,|2\>\}$, then $\E_\infty(G^\perp)=\supp(\E_\infty(|3\>\<3|)) = \supp((|2\>\<2|+|3\>\<3|)/2)$ and $\E_\infty(G) =\E_\infty(\hs)$. Thus $\mathcal{Y}(G)=\bscc_1$ and $\mathcal{X}(G)=\E_\infty(\hs)$.
    \item If $G = \spa\{|3\>\}$, then $\E_\infty(G^\perp) = \bscc_1\oplus\bscc_2$ and $\E_\infty(G)=\bscc_2$. Thus $\mathcal{Y}(G) = \{0\}$ and $\mathcal{X}(G) = \bscc_2$.
     \end{enumerate}
\end{example}

Now we can define probabilistic persistence and probabilistic repeated reachability.
\begin{definition}
    \begin{enumerate}
     \item The probability that a state $\rho$ satisfies the repeated reachability $\text{rep}(G)$ is the eventual probability in $\mathcal{X}(G)$, starting from $\rho$:
            \begin{equation*}
                \Pr(\rho\vDash \text{rep}(G)) = \lim_{k\rightarrow \infty}\tr(P_{\mathcal{X}(G)} \E^k(\rho)).
            \end{equation*}
        \item The probability that a state $\rho$ satisfies the persistence property $\text{pers}(G)$ is the eventual probability in $\mathcal{Y}(G)$, starting from $\rho$:
            \begin{equation*}
                \Pr(\rho\vDash \text{pers}(G))  = \lim_{k\rightarrow \infty}\tr(P_{\mathcal{Y}(G)} \E^k(\rho)).
            \end{equation*}
           \end{enumerate}
\end{definition}

The well-definedness of the above definition comes from the fact that $\mathcal{X}(G)$ and $\mathcal{Y}(G)$ are invariant. By Theorem~\ref{thm:SPnondec} we know that the two sequences $\{\tr(P_{\mathcal{X}(G)} \E^k(\rho))\}$ and $\{\tr(P_{\mathcal{Y}(G)} \E^k(\rho))\}$ are non-decreasing, and thus their limits exist.
Combining Theorems \ref{thm:Flimit} and \ref{thm:xGyG}, we have:
\begin{theorem}\label{thm:probPR}\begin{enumerate}
\item The repeated reachability probability is
    \begin{equation*}
        \Pr(\rho\vDash \text{rep}(G)) = 1-\tr(P_{\mathcal{X}(G)^\perp}\E_\infty(\rho)) = 1-\Pr(\rho\vDash \text{pers}(G^\perp)).
\end{equation*}
\item
    The persistence probability is
    \begin{equation*}
        \Pr(\rho\vDash \text{pers}(G)) = \tr(P_{\mathcal{Y}(G)}\E_\infty(\rho)).
        \end{equation*}\end{enumerate}
\end{theorem}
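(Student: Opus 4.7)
The plan is to reduce both parts of the theorem to a single auxiliary observation: for every subspace $V\subseteq\hs$ that is invariant under $\E$, one has
\[
\lim_{k\to\infty}\tr(P_V\E^k(\rho)) = \tr(P_V\E_\infty(\rho)).
\]
I would establish this by noting that Theorem~\ref{thm:SPnondec} makes $\tr(P_V\E^k(\rho))$ non-decreasing in $k$ and bounded above by $1$, so the limit on the left exists; meanwhile the defining identity $\E_\infty(\rho) = \lim_N \tfrac{1}{N}\sum_{n=1}^N \E^n(\rho)$, combined with linearity and continuity of the trace, exhibits $\tr(P_V\E_\infty(\rho))$ as the Ces\`aro average of $\tr(P_V\E^n(\rho))$. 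Since the Ces\`aro average of a convergent sequence agrees with its limit, the identity follows.

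Granting this, Part 2 is immediate: Theorem~\ref{thm:xGyG} declares $\mathcal{Y}(G)$ invariant, so instantiating $V=\mathcal{Y}(G)$ yields $\Pr(\rho\vDash\text{pers}(G))=\tr(P_{\mathcal{Y}(G)}\E_\infty(\rho))$.

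For Part 1 I would first instantiate $V=\mathcal{X}(G)$ (again invariant by Theorem~\ref{thm:xGyG}) to obtain $\Pr(\rho\vDash\text{rep}(G))=\tr(P_{\mathcal{X}(G)}\E_\infty(\rho))$. Since $\E$, and hence $\E_\infty$, is trace-preserving, and $\E_\infty(\rho)$ is supported in $\E_\infty(\hs)$, while $\mathcal{X}(G)^\perp$ is the orthocomplement of $\mathcal{X}(G)$ inside $\E_\infty(\hs)$ following the convention of Theorem~\ref{thm:xGyG}, the relation $P_{\mathcal{X}(G)}+P_{\mathcal{X}(G)^\perp}=P_{\E_\infty(\hs)}$ gives
\[
\tr(P_{\mathcal{X}(G)}\E_\infty(\rho)) + \tr(P_{\mathcal{X}(G)^\perp}\E_\infty(\rho)) = \tr(\E_\infty(\rho)) = \tr(\rho) = 1,
\]
which furnishes the first equality. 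The second equality then follows by invoking the orthocomplement identity $\mathcal{X}(G)^\perp=\mathcal{Y}(G^\perp)$ from Theorem~\ref{thm:xGyG} and applying Part 2 to $G^\perp$.

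The only step that needs care is the Ces\`aro reduction, in particular the interchange of $\tr$ with the defining limit of $\E_\infty$; this is automatic in finite dimension but worth writing down explicitly. Otherwise there is no real obstacle: the substantive content — invariance of $\mathcal{X}(G)$ and $\mathcal{Y}(G)$, and the orthocomplement duality $\mathcal{Y}(G)=\mathcal{X}(G^\perp)^\perp$ — has already been secured in Theorem~\ref{thm:xGyG}, so Theorem~\ref{thm:probPR} is essentially the assembly step that converts those structural facts into the two probability formulas.
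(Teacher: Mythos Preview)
Your proposal is correct and follows essentially the same line as the paper, which simply records Theorem~\ref{thm:probPR} as a consequence of Theorems~\ref{thm:Flimit} and~\ref{thm:xGyG} without writing out a separate proof. The one mild difference is packaging: you isolate the Ces\`aro identity $\lim_k\tr(P_V\E^k(\rho))=\tr(P_V\E_\infty(\rho))$ for an arbitrary invariant subspace $V$ and then instantiate it at $\mathcal{Y}(G)$ and $\mathcal{X}(G)$, whereas the paper's route would go through the BSCC decomposition of $\mathcal{Y}(G)$ (respectively $\mathcal{X}(G)$) and invoke Lemma~\ref{lem:BSCClimit} together with the transience of $T_\E$ from Theorem~\ref{thm:Flimit}; the underlying monotonicity-plus-Ces\`aro computation is identical in both cases.
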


Finally, we are able to give an algorithm for computing reachability and persistence probabilities (see Algorithm~\ref{alg:pers}).
\begin{algorithm}
    \SetKwData{Left}{left}\SetKwData{This}{this}\SetKwData{Up}{up}
    \SetKwFunction{Union}{Union}\SetKwFunction{FindCompress}{FindCompress}
    \SetKwInOut{Input}{input}\SetKwInOut{Output}{output}

    \Input{A quantum Markov chain $\<\hs, \E\>$, a subspace $G\subseteq \hs$, and an initial state $\rho\in \D(\hs)$}
    \Output{The probability $\Pr(\rho\vDash \text{pers}(G))$}

    \Begin{
        $\rho_\infty \la \E_\infty(\rho)$\;
        $Y \la \E_\infty(G^\perp)$\;
        $P\la \text{ the projection onto } Y^\perp$; \hfill (* $Y^\perp$ is the ortho-complement of $Y$ in $\E_\infty(\h)$ *)\\
        \Return{$\tr(P\rho_\infty)$}\;
    }
    \caption{Persistence($G$, $\rho$)}\label{alg:pers}
\end{algorithm}

\begin{theorem}\label{thm:algpersistence}
Given a quantum Markov chain $\<\hs, \E\>$, an initial state $\rho\in \D(\h)$, and a subspace $G\subseteq \hs$, Algorithm~\ref{alg:pers} computes persistence probability $\Pr(\rho\vDash \text{pers}(G))$ in time $O(n^8)$, where $n=\dim(\hs)$.
\end{theorem}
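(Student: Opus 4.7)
The plan is to verify correctness by combining Theorems~\ref{thm:probPR} and~\ref{thm:xGyG}, and then to bound the running time line-by-line.

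For correctness, Theorem~\ref{thm:probPR}(2) gives $\Pr(\rho\vDash \text{pers}(G)) = \tr(P_{\mathcal{Y}(G)}\E_\infty(\rho))$. Theorem~\ref{thm:xGyG} supplies the two identities $\mathcal{X}(G^\perp) = \E_\infty(G^\perp)$ and $\mathcal{Y}(G) = \mathcal{X}(G^\perp)^\perp$, where the complement is taken inside $\E_\infty(\hs)$. Setting $Y=\E_\infty(G^\perp)$ as the algorithm does, these identities collapse to $\mathcal{Y}(G) = Y^\perp$ (complement in $\E_\infty(\hs)$), so the projector $P$ constructed in line~3 coincides with $P_{\mathcal{Y}(G)}$. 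Since $\rho_\infty = \E_\infty(\rho)$, the returned value $\tr(P\rho_\infty)$ is exactly $\Pr(\rho\vDash \text{pers}(G))$.

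For complexity, I would flatten $\E$ to its $n^2\times n^2$ matrix representation $M_\E$ and compute the spectral projection $\E_\infty$ onto the $1$-eigenspace of $M_\E$ via eigendecomposition; this is well within the claimed $O(n^8)$ budget. Given $\E_\infty$, one application produces $\rho_\infty=\E_\infty(\rho)$, and $Y=\E_\infty(G^\perp)$ is obtained by applying $\E_\infty$ to each member of an orthonormal basis of $G^\perp$ (at most $n$ vectors, each viewed as a rank-one density operator) and then taking the join of the resulting supports. Constructing the orthogonal projection $P$ onto $Y^\perp$ inside $\E_\infty(\hs)$ and evaluating the final trace are then standard linear-algebraic steps. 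Adding these costs gives an overall bound of $O(n^8)$.

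The point I expect to verify most carefully is the identification of $Y^\perp$, where the complement is taken inside $\E_\infty(\hs)$, with $\mathcal{Y}(G)$, rather than with the larger subspace $\hs\ominus Y$. Once this is pinned down, both the correctness chain and the complexity accounting are straightforward consequences of the previously established results, with no additional ideas required beyond those already developed in Sections~\ref{sec:de} and~\ref{sec:rb}.
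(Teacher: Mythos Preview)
Your argument is correct and mirrors the paper's proof: correctness is obtained by chaining Theorem~\ref{thm:probPR}(2) with the identities $\mathcal{Y}(G)=\mathcal{X}(G^\perp)^\perp=\E_\infty(G^\perp)^\perp$ from Theorem~\ref{thm:xGyG}, and the $O(n^8)$ bound comes from computing $\E_\infty$ via the Jordan decomposition of the $n^2\times n^2$ matrix representation of $\E$ (cf.\ Lemma~\ref{lem:algo}). The only minor refinement is that, since $M_\E$ need not be diagonalizable, one should speak of Jordan decomposition rather than eigendecomposition when extracting $\E_\infty$, but this does not affect the complexity claim.
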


With Theorem~\ref{thm:probPR}, Algorithm \ref{alg:pers} can also be used to compute repeated reachability probability $\Pr(\rho\vDash \text{rep}(G))$.

\section{Conclusions}\label{sec:Conclusions}

We introduced the notion of bottom strongly connected component (BSCC) of a quantum Markov chain (qMC) and studied the BSCC decompositions of qMCs. This enables us to develop an efficient algorithm for computing repeated reachability and persistence probabilities of qMCs. Such an algorithm may be used to verify safety and liveness properties of physical systems produced in quantum engineering and quantum programs for future quantum computers.

\newpage

\appendix
\section{Proofs of Lemmas and Theorems}
 We first collect some simple properties of the supports of super-operators for our latter use.

 \begin{proposition}\label{prop:sup} \begin{enumerate}\item If $A=\sum_k\lambda_k |\phi_k\rangle\langle\phi_k|$ where all $\lambda_k>0$ (but $|\phi_k\rangle$'s are not required to be pairwise orthogonal), then $\supp (A)=\spa\{|\phi_k\rangle\};$
\item $\supp(\mathcal{E}(\rho+\sigma))=\supp(\mathcal{E}(\rho))\vee \supp(\mathcal{E}(\sigma));$
\item If $\E=\sum_{i\in I} E_i\cdot E_i^\dag$, then $\E(X)=\spa \{E_i|\psi\rangle: i\in I, |\psi\rangle\in X\};$
\item  $\E(X_1 \vee X_2)= \E(X_1) \vee\E(X_2)$. Thus, $X\subseteq Y\Rightarrow \E(X)\subseteq \E(Y).$
\end{enumerate}
\end{proposition}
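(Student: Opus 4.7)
The plan is to prove the four statements in order, since the later ones all hinge on part~(1) and its extension in part~(3). Parts~(1) and~(2) are facts about positive operators; parts~(3) and~(4) then lift these to super-operators via the Kraus representation and the definition $\E(X)=\bigvee_{|\psi\>\in X}\supp(\E(\psi))$.

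For part~(1), I would prove the two inclusions separately. For $\spa\{|\phi_k\>\}\subseteq\supp(A)$: take any $|v\>\in\supp(A)^\perp=\ker(A)$ and compute
\[
0=\<v|A|v\>=\sum_k\lambda_k|\<v|\phi_k\>|^2.
\]
Since every $\lambda_k>0$, this forces $\<v|\phi_k\>=0$ for all $k$, so $\ker(A)\perp|\phi_k\>$, i.e.\ $|\phi_k\>\in\supp(A)$. For the reverse inclusion, note that $A|\psi\>=\sum_k\lambda_k\<\phi_k|\psi\>|\phi_k\>\in\spa\{|\phi_k\>\}$ for every $|\psi\>$, so $\mathrm{range}(A)\subseteq\spa\{|\phi_k\>\}$; and for a positive operator $\supp(A)=\mathrm{range}(A)$. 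This is the only nontrivial ingredient and I would regard it as the main technical step — everything else is a bookkeeping consequence.

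Part~(2) follows by linearity: $\E(\rho+\sigma)=\E(\rho)+\E(\sigma)$, and each of $\E(\rho)$, $\E(\sigma)$ admits a spectral decomposition $\sum_i a_i|\alpha_i\>\<\alpha_i|$ and $\sum_j b_j|\beta_j\>\<\beta_j|$ with strictly positive weights. The sum is then a combination of rank-one projectors with positive coefficients, and part~(1) gives
\[
\supp(\E(\rho)+\E(\sigma))=\spa\{|\alpha_i\>,|\beta_j\>\}=\supp(\E(\rho))\vee\supp(\E(\sigma)).
\]
For part~(3), note that for a pure state $|\psi\>$ one has $\E(\psi)=\sum_i(E_i|\psi\>)(E_i|\psi\>)^\dag$, so by part~(1), applied after dropping zero terms, $\supp(\E(\psi))=\spa\{E_i|\psi\>:i\in I\}$. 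Taking the join over $|\psi\>\in X$ then yields $\E(X)=\spa\{E_i|\psi\>:i\in I,\,|\psi\>\in X\}$.

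Finally, part~(4) uses part~(3) and the description $X_1\vee X_2=\{|\psi_1\>+|\psi_2\>:|\psi_j\>\in X_j\}$. For any $|\chi\>=|\psi_1\>+|\psi_2\>\in X_1\vee X_2$ we have $E_i|\chi\>=E_i|\psi_1\>+E_i|\psi_2\>\in\E(X_1)\vee\E(X_2)$, giving $\E(X_1\vee X_2)\subseteq\E(X_1)\vee\E(X_2)$; the reverse inclusion is immediate since $X_j\subseteq X_1\vee X_2$ and part~(3) is visibly monotone in the set of source vectors. The monotonicity corollary follows because $X\subseteq Y$ implies $X\vee Y=Y$, hence $\E(X)\vee\E(Y)=\E(Y)$, so $\E(X)\subseteq\E(Y)$.
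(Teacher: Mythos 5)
Your proposal is correct in all four parts: the kernel/range argument for part~(1), the reduction of part~(2) to part~(1) via spectral decompositions, the pure-state computation $\E(\psi)=\sum_i(E_i|\psi\rangle)(E_i|\psi\rangle)^\dag$ for part~(3), and the sum-of-subspaces description of $X_1\vee X_2$ for part~(4) are all sound, including the care taken to drop zero terms before invoking part~(1). The paper itself states this proposition without proof, treating it as a collection of elementary facts, and your write-up is exactly the standard argument one would supply to fill that gap.
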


Let $\E=\sum_i E_i\cdot E_i^\dag$ be a super-operator on an $n$-dimensional Hilbert space $\h$. The matrix representation $M$ of $\E$ is an $n^2\times n^2$ matrix $M=\sum_i E_i\otimes E_i^*$ \cite{YYFD11,Wolf12}. Let $M=SJS^{-1}$ be the Jordan decomposition of $M$ where $$J=\bigoplus_{k=1}^K J_k(\lambda_k),$$ and $J_k(\lambda_k)$ is a Jordan block corresponding to the eigenvalue $\lambda_k$. Define $$J_\infty=\bigoplus_{ k:\lambda_k =1
} J_k(\lambda_k)$$ and $M_\infty = S J_\infty S^{-1}$. Then from \cite[Proposition 6.3]{Wolf12}, we know that $M_\infty$ is exactly the matrix representation of $\E_\infty$.

\begin{lemma}\label{lem:algo}
Let $\<\hs, \E\>$ be  a quantum Markov chain with $\dim(\h)=n$, and $\rho\in \D(\h)$.
\begin{enumerate}
\item The asymptotic average of $\rho$ under $\E$, i.e. $\E_{\infty}(\rho)$, can be computed in time $O(n^8)$.
\item A density operator basis of the set $\{A\in \B(\hs) : \E(A) = A\}$ can be computed in time $O(n^6)$.
\end{enumerate}
\end{lemma}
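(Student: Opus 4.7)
The plan is to work with the matrix representation $M=\sum_i E_i\otimes E_i^*$ of $\E$, an $n^2\times n^2$ matrix which, as the preamble to the lemma notes, is linked to $\E_\infty$ by $M_\infty=SJ_\infty S^{-1}$ through the Jordan decomposition $M=SJS^{-1}$. Building $M$ from the Kraus operators is a sum of at most $n^2$ tensor products of $n\times n$ matrices and costs $O(n^6)$; this preprocessing is common to both parts.

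For part~1, the bottleneck is the Jordan decomposition of $M$. A standard algorithm for the Jordan form of an $m\times m$ matrix runs in $O(m^4)$, which for $m=n^2$ gives $O(n^8)$. With $S$ and $J$ in hand, I would zero out every Jordan block whose eigenvalue is not $1$ to obtain $J_\infty$ (the preamble guarantees that blocks with $|\lambda|=1$ have size $1$, and under the Cesàro mean, blocks with $|\lambda|<1$ decay and those with $|\lambda|=1$ but $\lambda\neq 1$ vanish as $\frac{1}{N}\sum_{n=1}^N\lambda^n\to 0$), form $M_\infty=SJ_\infty S^{-1}$ in $O(n^6)$, and finally compute $M_\infty\,\mathrm{vec}(\rho)$ in $O(n^4)$ and reshape into an $n\times n$ matrix. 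The Jordan step dominates, yielding $O(n^8)$.

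For part~2, I would avoid Jordan and instead compute a basis of $\ker(I-M)$ by Gaussian elimination on the $n^2\times n^2$ matrix $I-M$, at cost $O(n^6)$, producing matrices $A_1,\ldots,A_k$ spanning $F:=\{A\in\B(\h):\E(A)=A\}$. Since $\E$ preserves Hermiticity, $F$ is closed under $\dag$, so picking $k$ linearly independent elements among $\{(A_i+A_i^\dag)/2,\,(A_i-A_i^\dag)/(2i)\}$ gives a Hermitian basis $H_1,\ldots,H_k$ of $F$, still within $O(n^6)$. Each $H_j$ is then diagonalised ($O(n^3)$) and split as $H_j=H_j^+-H_j^-$ into its orthogonal positive and negative parts.

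The main obstacle is to prove that each $H_j^\pm$ is itself in $F$, so that collecting the nonzero $H_j^\pm$ and normalising by trace yields a density-operator basis of $F$. For this I would apply $\E_\infty$, which is a positive trace-preserving projector onto $F$: since $H_j\in F$, one has $\E_\infty(H_j^+)-\E_\infty(H_j^-)=H_j^+-H_j^-$, with both $A:=\E_\infty(H_j^+)$ and $B:=\E_\infty(H_j^-)$ positive and satisfying $\tr(A)=\tr(H_j^+)$, $\tr(B)=\tr(H_j^-)$. Let $P$ project onto $\supp(H_j^+)$. The inequality $\tr(PA)\le\tr(A)=\tr(H_j^+)$, combined with $\tr(PA)-\tr(PB)=\tr(PH_j^+)-\tr(PH_j^-)=\tr(H_j^+)$ and $\tr(PB)\ge 0$, forces $\tr(PB)=0$ and $\tr(PA)=\tr(A)$; hence $\supp(B)\subseteq\supp(H_j^+)^\perp$ and $\supp(A)\subseteq\supp(H_j^+)$. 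The symmetric argument with the projector onto $\supp(H_j^-)$ gives $\supp(A)\subseteq\supp(H_j^-)^\perp$ and $\supp(B)\subseteq\supp(H_j^-)$. Projecting the identity $A-B=H_j^+-H_j^-$ onto the orthogonal subspaces $\supp(H_j^\pm)$ then forces $A=H_j^+$ and $B=H_j^-$, so $\E_\infty(H_j^\pm)=H_j^\pm\in F$. After this, discarding zeros, selecting $k$ linearly independent elements, and normalising by trace completes the construction; all remaining linear algebra fits in $O(n^6)$.
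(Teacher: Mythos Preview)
Your proposal is correct and follows essentially the same route as the paper: Jordan decomposition of the $n^2\times n^2$ matrix representation $M$ for part~1, and Gaussian elimination on $I-M$ followed by splitting each fixed point into its positive and negative parts for part~2. The only notable difference is that the paper simply cites \cite[Proposition~6.8]{Wolf12} for the fact that the positive and negative parts of a Hermitian fixed point are again fixed points, whereas you give a self-contained proof of this via the projector $\E_\infty$; your argument is valid (and is in fact one standard proof of that proposition), and since $\E_\infty$ is used only theoretically rather than computed, the $O(n^6)$ bound for part~2 is unaffected.
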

\begin{proof}

\begin{enumerate}
\item  Note that the time complexity of Jordan decomposition is
$O(d^4)$ for a $d\times d$ matrix. We can compute $M_\infty$, the matrix representation of $\E_\infty$, in time $O(n^8)$.
Then $\E_{\infty}(\rho)$ can be easily derived from the correspondence
$$(\E_{\infty}(\rho)\otimes I_\h) |\Psi\> = M_\infty(\rho\otimes I_\h) |\Psi\>$$
where $|\Psi\> =\sum_{i=1}^n |i\>|i\>$ is the (unnormalised) maximally entangled state in $\h\otimes \h$.
\item We compute the density operator basis by the following three steps:
\begin{enumerate}
    \item[(i)] Compute the matrix representation $M$ of $\E$. The time complexity is $O(mn^4)$, where $m\leq n^2$ is the number of $E_i$'s in $\E=\sum_i E_i\cdot E_i^\dag$.
    \item[(ii)] Find a basis $\B$ for the null space of the matrix $M-I_{\hs\otimes \hs}$, and transform them into matrix forms. This can be done by Guassian elimination with complexity being $O((n^2)^3) = O(n^6)$.
    \item[(iii)] For each basis matrix $A$ in $\B$, compute positive matrices $X_+,X_-,Y_+,Y_-$ such that  $\supp(X_+)\perp\supp(X_-)$, $\supp(Y_+)\perp\supp(Y_-)$, and $A = X_+-X_-+i(Y_+-Y_-)$. Let $Q$ be the set of nonzero elements in $\{X_+,X_-,Y_+,Y_-\}$. Then
by \cite[Proposition 6.8]{Wolf12}, every element of $Q$ is a fixed point state of $\E$. Replace $A$ by elements of $Q$ after normalisation. Then the resultant $\B$ is the required density operator basis. At last, we make the elements in $\B$ linearly independent. This can be done by removing elements in $\B$ using Guassian elimination.
The computational complexity of this step is $O(n^6)$.
\end{enumerate}
With these steps, we know the total complexity is $O(n^6)$. \qed
\end{enumerate}
\end{proof}

\renewcommand{\proofname}{Proof of Lemma \ref{lem:TransitiveRelation}}
\begin{proof} We only prove part 1; for the proof of part 2, see~\cite{YY12}.
    It follows from $$\supp(\rho)\subseteq \mathcal{R}_\mathcal{G}(\sigma)=\bigvee_{n=0}^\infty\supp(\E^n(\sigma))$$ that
    \begin{align*}
        \mathcal{R}_\mathcal{G}(\rho)&=\bigvee_{m=0}^\infty\E^m(\supp(\rho))\subseteq \bigvee_{m=0}^\infty\E^m(\bigvee_{n=0}^\infty\supp(\E^n(\sigma)))\\
        &=\bigvee_{i=0}^\infty\supp(\E^i(\sigma))
        =\mathcal{R}_\mathcal{G}(\sigma).
    \end{align*}
    \qed
\end{proof}

\renewcommand{\proofname}{Proof of Theorem \ref{thm:SPnondec}}

\begin{proof}
    Let $X^\perp$ be the ortho-complement of $X$ and $Q$ the projection onto $X^\perp$. Then $P_X+Q=I$.
    Let $\E^{*} = \sum_i E_i^\dag\cdot E_i$ be the Schr\"odinger-Heisenberg dual of $\E$.
    Since $X$ is invariant under $\E$, we have $\<\psi|E_i|\phi\> = 0$ for any $|\phi\> \in X$, $|\psi\> \in X^\perp$ and for any $i$. Thus $\E^{*}(Q)$ is in $X^\perp$. Furthermore, it holds that $\E^{*}(Q)\leq \E^{*}(I) = I$. This implies $\E^{*}(Q)\leq Q$. Finally, we have
    \begin{equation*}\begin{split}\tr(P_X\E(\rho)) &= \tr(\E^{*}(P_X)\rho) \\ &= \tr(\E^*(I)\rho) - \tr(\E^*(Q)\rho) \geq \tr(\rho)-\tr(Q\rho) = \tr(P_X\rho).
    \end{split}\end{equation*}
     \qed
\end{proof}

\renewcommand{\proofname}{Proof of Lemma \ref{lem:BSCCRG1}}
\begin{proof}
    We only prove the necessity part; the sufficiency part is obvious. Suppose $X$ is a BSCC. By the strong connectivity of $X$, we have $\RG(\psi)\supseteq X$ for all $|\psi\> \in X$. On the other hand, we have from the invariance of $X$ that $\E(X) \subseteq X$. Thus $\RG(\phi) = X$ for any non-zero vector $|\phi\>$ in $X$.\qed
\end{proof}

\renewcommand{\proofname}{Proof of Lemma \ref{lem:BSCCRG}}
\begin{proof} Part 1: Suppose conversely that there exists a nonzero vector $|\phi\> \in A \cap B$. Then by Lemma \ref{lem:BSCCRG1}, we have $A = \RG(\phi) = B$, contradicting the assumption that $A\neq B$. Therefore $A \cap B = \{0\}$.

Part 2: We postpone this part after the proof of Theorem~\ref{thm:Hdec}.
\qed
\end{proof}

Before proving Theorem~\ref{thm:BSCC}, we recall some basic properties of fixed point states. For more details, we refer to \cite{BCGPY12,Wolf12}.

\begin{lemma}(\cite[Proposition 6.3, Proposition 6.9]{Wolf12})\label{lem:EinftyCPTP}
    If $\E$ is a super-operator on $\hs$, then
\begin{enumerate}
\item for any density operator $\rho$, $\E_\infty(\rho)$ is a fixed point state of $\E$;
\item for any fixed point state $\sigma$, it holds that $\supp(\sigma) \subseteq \E_\infty(\hs).$
\end{enumerate}
\end{lemma}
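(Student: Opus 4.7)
The plan is to exploit the Ces\`aro-average definition $\E_\infty = \lim_{N\to\infty} \frac{1}{N}\sum_{n=1}^N \E^n$ directly, taking the existence of this limit as a super-operator from \cite{Wolf12} (already invoked in the excerpt immediately after Eq.~(\ref{eq:Einfty})).

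For part 1, I would apply $\E$ to $\E_\infty(\rho)$, commute $\E$ past the Ces\`aro limit by linearity plus finite-dimensional continuity, and then perform a one-step index shift:
\begin{equation*}
\E(\E_\infty(\rho)) = \lim_{N\to\infty}\frac{1}{N}\sum_{n=1}^N \E^{n+1}(\rho) = \E_\infty(\rho) + \lim_{N\to\infty}\frac{1}{N}\bigl(\E^{N+1}(\rho)-\E(\rho)\bigr).
\end{equation*}
Because $\E$ is trace-preserving, every iterate $\E^k(\rho)$ remains a density operator and is thus bounded in norm by $\tr(\rho)$, so the residual term vanishes and $\E(\E_\infty(\rho)) = \E_\infty(\rho)$. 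Nonzeroness follows from $\tr(\E_\infty(\rho)) = \tr(\rho) = 1$, so $\E_\infty(\rho)$ qualifies as a fixed point state in the sense of the paper's definition.

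For part 2, the starting point is that $\E(\sigma)=\sigma$ iterates to $\E^n(\sigma)=\sigma$ for every $n\geq 1$, whence $\E_\infty(\sigma) = \sigma$ by averaging. I would then take the spectral decomposition $\sigma = \sum_k \lambda_k |\phi_k\>\<\phi_k|$ with all $\lambda_k > 0$ and push it through linearity of $\E_\infty$ to obtain $\sigma = \sum_k \lambda_k \E_\infty(|\phi_k\>\<\phi_k|)$. Iterated application of Proposition~\ref{prop:sup}(2), instantiated with the identity super-operator, then gives $\supp(\sigma) = \bigvee_k \supp(\E_\infty(|\phi_k\>\<\phi_k|))$, and each summand on the right lies in $\E_\infty(\hs)$ directly from the paper's definition $\E_\infty(\hs) = \bigvee_{|\psi\>\in\hs}\supp(\E_\infty(|\psi\>\<\psi|))$.

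The only genuinely nontrivial ingredient is existence of the Ces\`aro limit $\E_\infty$ as a super-operator, which we import from \cite{Wolf12}; once that is in hand, both claims reduce to short index-shift and spectral-decomposition manipulations, so I do not anticipate any serious obstacle beyond the cited existence result.
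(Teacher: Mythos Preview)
The paper does not actually supply a proof of Lemma~\ref{lem:EinftyCPTP}; it simply imports the result from \cite[Propositions~6.3 and~6.9]{Wolf12}. Your argument is correct and provides a self-contained proof: the index-shift computation for part~1 and the observation $\E_\infty(\sigma)=\sigma$ combined with Proposition~\ref{prop:sup} for part~2 are exactly the standard Ces\`aro-average manipulations, with the existence of $\E_\infty$ as a super-operator (the only nontrivial input) properly flagged as coming from \cite{Wolf12}.
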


An operator (not necessarily a partial density operator) $A\in \mathcal{B}(\h)$ is called a fixed point of super-operator $\E$ if $\E(A) = A$.

\begin{lemma}\label{lem:fpProperty}
    Let $\E$ be a super-operator on $\hs$. Then
    \begin{enumerate}
\item (\cite[Proposition 6.8]{Wolf12}) If $A$ is a fixed point of $\E$, and $$A = (X_+-X_-) + i(Y_+-Y_-)$$ where $X_+$, $X_-$, $Y_+$, $Y_-$ are positive operators with $\supp(X_+)\bot \supp(X_-)$ and $\supp(Y_+)\bot \supp(Y_-)$, then $X_+$, $X_-$, $Y_+$, $Y_-$ are all fixed points of $\E$.
        \item (\cite[Lemma 2]{BCGPY12}) If $\rho$ is a fixed point state for $\E$, then $supp(\rho)$ is an invariant subspace. Conversely, if $X$ is an invariant subspace of $\E$, then there exists a fixed point state $\rho_X$ such that $supp(\rho_X)\subseteq X$.
    \end{enumerate}
\end{lemma}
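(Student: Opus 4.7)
The plan is to prove the two parts separately, in each case exploiting the positivity-preserving and trace-preserving structure of $\E$.

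For Part 1, I would first observe that $\E$ preserves adjoints: from the Kraus form $\E = \sum_i E_i \cdot E_i^\dag$ one has $\E(A)^\dag = \E(A^\dag)$, so $\E(A)=A$ implies $\E(A^\dag)=A^\dag$. Taking Hermitian and anti-Hermitian parts and using linearity of $\E$, the problem reduces to showing the following sub-claim: if $Z_+, Z_-$ are positive with $\supp(Z_+)\perp\supp(Z_-)$ and $\E(Z_+)-\E(Z_-)=Z_+-Z_-$, then $\E(Z_\pm)=Z_\pm$. Apply this sub-claim to $(X_+,X_-)$ and to $(Y_+,Y_-)$ separately. To prove the sub-claim, let $P$ be the projection onto $\supp(Z_+)$. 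Sandwiching $\E(Z_+)-\E(Z_-)=Z_+-Z_-$ by $P$ on both sides and using $PZ_+P=Z_+$, $PZ_-P=0$ yields
\[
Z_+ = P\E(Z_+)P - P\E(Z_-)P \leq P\E(Z_+)P,
\]
since $P\E(Z_-)P\geq 0$. Taking traces and invoking trace preservation,
\[
\tr(Z_+) \leq \tr(P\E(Z_+)P) \leq \tr(\E(Z_+)) = \tr(Z_+),
\]
so equality holds throughout. This forces $\supp(\E(Z_+))\subseteq \supp(Z_+)$ and $\tr(P\E(Z_-))=0$, hence $\supp(\E(Z_-))\perp\supp(Z_+)$. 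Applying the symmetric argument with the projection onto $\supp(Z_-)$ gives the dual conclusions. Plugging these back into $\E(Z_+)-\E(Z_-)=Z_+-Z_-$: both sides now split as orthogonal direct sums over $\supp(Z_+)$ and $\supp(Z_-)$, so matching the two orthogonal components yields $\E(Z_+)=Z_+$ and $\E(Z_-)=Z_-$.

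For Part 2, the forward implication is immediate from the Kraus form. Given $\E(\rho)=\rho$, each summand $E_i\rho E_i^\dag$ is positive and bounded above by $\rho$, so $\supp(E_i\rho E_i^\dag)\subseteq \supp(\rho)$. If $\rho=\sum_k\lambda_k|\phi_k\rangle\langle\phi_k|$ with $\lambda_k>0$, then by Proposition~\ref{prop:sup}(1) this gives $E_i|\phi_k\rangle\in\supp(\rho)$ for every $i,k$, so $E_i|\psi\rangle\in\supp(\rho)$ for all $|\psi\rangle\in\supp(\rho)$ by linearity. Proposition~\ref{prop:sup}(3) then gives $\E(\supp(\rho))\subseteq\supp(\rho)$.

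For the converse, assume $X$ is invariant. Pick any density operator $\sigma$ supported in $X$, for instance $\sigma=P_X/\dim X$. Invariance together with Proposition~\ref{prop:sup}(4) gives $\supp(\E^n(\sigma))\subseteq X$ for every $n\geq 0$, hence $\rho_X:=\E_\infty(\sigma)=\lim_{N\to\infty}\frac{1}{N}\sum_{n=1}^N\E^n(\sigma)$ is also supported in $X$. By Lemma~\ref{lem:EinftyCPTP}(1), $\rho_X$ is a fixed point state of $\E$, completing the proof.

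The main obstacle is the sub-claim in Part 1: the step where one combines the operator inequality $Z_+\leq P\E(Z_+)P$ with the trace-preserving identity to squeeze out support containment, and then rules out any off-support leakage. Everything else is routine algebraic bookkeeping about positive operators and Kraus representations.
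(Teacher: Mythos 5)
Your proposal is correct, but there is nothing in the paper to compare it against line-by-line: the paper gives no proof of this lemma at all, importing Part 1 from \cite[Proposition 6.8]{Wolf12} and Part 2 from \cite[Lemma 2]{BCGPY12}, so the real comparison is with those cited sources. Your argument is a sound, self-contained derivation, and it in fact mirrors the standard proofs. For Part 1, the reduction via $\E(A)^\dag=\E(A^\dag)$ to the two Hermitian equations $\E(X_+)-\E(X_-)=X_+-X_-$ and $\E(Y_+)-\E(Y_-)=Y_+-Y_-$, followed by the squeeze $Z_+\leq P\E(Z_+)P$ and the trace chain $\tr(Z_+)\leq\tr(P\E(Z_+)P)\leq\tr(\E(Z_+))=\tr(Z_+)$, is exactly the classical argument behind Wolf's Proposition 6.8: the forced equalities give $\supp(\E(Z_+))\subseteq\supp(Z_+)$ and $P\E(Z_-)P=0$ (hence, with the symmetric step, $\supp(\E(Z_\pm))\subseteq\supp(Z_\pm)$), after which projecting $\E(Z_+)-\E(Z_-)=Z_+-Z_-$ onto the two orthogonal supports yields $\E(Z_\pm)=Z_\pm$. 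For Part 2, the forward direction via $E_i\rho E_i^\dag\leq\rho$ together with Proposition~\ref{prop:sup} is correct, and the converse via the Ces\`{a}ro average $\rho_X=\E_\infty(P_X/\dim X)$ is the same construction used in \cite{BCGPY12}; invariance keeps every $\supp(\E^n(\sigma))$, hence $\supp(\rho_X)$, inside $X$, Lemma~\ref{lem:EinftyCPTP}.1 makes $\rho_X$ a fixed point, and trace preservation gives $\tr(\rho_X)=1$, so $\rho_X$ is indeed a nonzero fixed point state (a point worth stating explicitly, since the paper's definition of fixed point state requires nonzeroness). Note also that your reliance on Lemma~\ref{lem:EinftyCPTP} introduces no circularity, as that is a separate cited result about $\E_\infty$. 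What your write-up buys over the paper is precisely this self-containedness: a reader gets complete proofs of both imported facts from nothing more than the Kraus form, positivity, and trace preservation.
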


\renewcommand{\proofname}{Proof of Theorem~\ref{thm:BSCC}}

\begin{proof}
We first prove the sufficiency part. Let $\rho$ be a minimal fixed point state such that $supp(\rho) = X$. Then by Lemma \ref{lem:fpProperty}.2, $X$ is an invariant subspace. To show that $X$ is a BSCC, by Lemma~\ref{lem:BSCCRG1} it suffices to prove for any $|\phi\> \in X$, $\RG(\phi)=X$. Suppose conversely there exists $|\psi\> \in X$ such that $\RG(\psi)\subsetneq X$. Then by Lemma~\ref{lem:TransitiveRelation} $\RG(\psi)$ is an invariant subspace of $\E$ as well. By Lemma \ref{lem:fpProperty}.2, we can find a fixed point state $\rho_\psi$ with $\supp(\rho_\psi)\subseteq \RG(\psi)\subsetneq X$, contradicting the assumption that $\rho$ is minimal.

For the necessity part, let $X$ be a BSCC. Then $X$ is invariant, and by Lemma \ref{lem:fpProperty}.2, we can find a minimal fixed point state $\rho_X$ with $\supp(\rho_X)\subseteq X$. Take $|\phi\>\in \supp(\rho_X)$. By Lemma \ref{lem:BSCCRG} we have $\RG(\phi) = X$. But on the other hand, by Lemma \ref{lem:fpProperty}.2 again we have $\supp(\rho_X)$ is invariant, so $\RG(\phi) \subseteq \supp(\rho_X)$. Thus $\supp(\rho_X)= X$ indeed.

Finally, the uniqueness of $\rho$ comes from the observation that whenever $\rho$ and $\sigma$ are both fixed point states of $\E$, then so is $\lambda \rho + \gamma\sigma$, provided that it is nonzero, for any real numbers $\lambda$ and $\gamma$.
    \qed
\end{proof}

\renewcommand{\proofname}{Proof of Theorem \ref{thm:algcheckBSCC}}
\begin{proof}
By Theorem \ref{thm:BSCC}, to check whether a subspace $X$ is a BSCC, it is sufficient to check the following two properties:
\begin{enumerate}
\item $X$ is invariant;
\item Every fixed point state of $\E|_X$ has $X$ as its support. \end{enumerate}
That justifies the correctness of Algorithm \ref{alg:checkBSCC}. The complexity of this algorithm mainly comes from the statement $(*)$ which, according to Lemma~\ref{lem:algo}.2, can be computed in time $O(n^6)$. \qed
\end{proof}

\renewcommand{\proofname}{Proof of Theorem \ref{thm:Flimit}}
\begin{proof} Let $P$ be the projection onto $T_\E$, $\rho\in \D(\hs)$, and $p_k =\tr( P\E^k(\rho))$. Since $\E_\infty (\hs)$ is an invariant subspace, by Theorem \ref{thm:SPnondec} we have $p_k$ is non-increasing. Thus the limit $p_\infty = \lim_{k\rightarrow \infty} p_k$ does exist. Furthermore, noting that $\supp(\E_\infty(\rho))\subseteq \E_\infty (\hs)$, we have
    \begin{align*}
        0 &= \tr(P\E_\infty(\rho)) = \tr\left(P \lim_{N\rightarrow \infty} \frac{1}{N} \sum_{n=1}^N \E^n(\rho)\right)\\
        &= \lim_{N\rightarrow \infty}  \frac{1}{N} \sum_{n=1}^N \tr(P\E^n(\rho)) = \lim_{N\rightarrow \infty}  \frac{1}{N} \sum_{n=1}^N p_n\\
        &\geq \lim_{N\rightarrow \infty}  \frac{1}{N} \sum_{n=1}^N  p_\infty = p_\infty.
    \end{align*}
    Thus $p_\infty = 0$, and $T_\E$ is transient by the arbitrariness of $\rho$.

To show that $T_\E$ is the largest transient subspace of $\G$, note that $\supp(\E_\infty(I))=\E_\infty(\hs)$. Let $\sigma = \E_\infty(I/d)$. Then by Lemma \ref{lem:EinftyCPTP}, $\sigma$ is a fixed point state with $\supp(\sigma)=T_\E^\perp$. Suppose $Y$ is a transient subspace. We have $$\lim_{i\ra \infty}\tr(P_Y\E^i(\sigma))=\tr(P_Y\sigma) =0.$$ This implies $Y\perp\supp(\sigma)=T_\E^\perp$. That is $Y\subseteq T_\E$.
    \qed
\end{proof}

\renewcommand{\proofname}{Proof}
\begin{lemma}\label{lem:FPSdirsum}
Let $\rho$ and $\sigma$ be two fixed point states of $\E$, and $\supp(\sigma)\subsetneqq \supp(\rho)$. Then there exists another fixed point state $\eta$ with $\supp(\eta)\bot\supp(\sigma)$ such that $$\supp(\rho) = \supp(\eta)\oplus \supp(\sigma).$$
\end{lemma}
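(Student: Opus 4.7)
The plan is to construct $\eta$ explicitly as a renormalisation of the compression $Q\rho Q$, where $Q$ denotes the projection onto the orthogonal complement of $\supp(\sigma)$ inside $\supp(\rho)$. The main task, and the main obstacle, will be to prove that the subspace $Y := \supp(\rho)\ominus\supp(\sigma)$ is itself invariant under $\E$. This is not a formal consequence of the invariance of $\supp(\sigma)$ alone; it is exactly here that the fixed-point hypothesis on $\rho$ enters essentially.

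To show invariance of $Y$, I will work through the Schr\"odinger-Heisenberg dual $\E^* = \sum_i E_i^\dag\cdot E_i$. Write $P = P_{\supp(\sigma)}$ and $P_X = P_{\supp(\rho)}$. Invariance of $\supp(\sigma)$ (Lemma~\ref{lem:fpProperty}.2) combined with Theorem~\ref{thm:SPnondec} yields $\E^*(P) \geq P$ as operators on $\hs$. Since $\rho$ is a fixed point, $\tr\bigl((\E^*(P) - P)\rho\bigr) = 0$, and positivity of both factors forces $(\E^*(P) - P)\rho = 0$; because $\rho$ is positive definite on $\supp(\rho)$, this upgrades to the operator identity $\E^*(P)P_X = P$. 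For any $\ket\psi\in Q\hs\subseteq\supp(\rho)$ we then have $\E^*(P)\ket\psi = P\ket\psi = 0$, hence $0 = \<\psi|\E^*(P)|\psi\> = \sum_i \|PE_i\ket\psi\|^2$, forcing $PE_i\ket\psi = 0$ for every Kraus operator $E_i$. Thus $E_iQ\hs\subseteq\ker P$, and combining with $E_iQ\hs\subseteq\supp(\rho)$ (from invariance of $\supp(\rho)$) gives $E_iQ\hs\subseteq Q\hs$, so $Y$ is invariant under $\E$.

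Once $Y$ is invariant, we have $PE_iQ = QE_iP = 0$ on $\supp(\rho)$ for every $i$, so $Q$ commutes with each $E_i$ there. This at once yields $\E(Q\rho Q) = Q\E(\rho)Q = Q\rho Q$, showing that $Q\rho Q$ is already a fixed point of $\E$. Setting $\eta := Q\rho Q/\tr(Q\rho Q)$, positive definiteness of $\rho$ on $\supp(\rho)$ combined with $Q\hs\subseteq\supp(\rho)$ ensures $\supp(\eta) = Q\hs = Y$. The remaining conclusions $\supp(\eta)\perp\supp(\sigma)$ and $\supp(\rho) = \supp(\eta)\oplus\supp(\sigma)$ are then immediate from the construction.
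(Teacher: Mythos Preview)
Your proof is correct and takes a genuinely different route from the paper's. The paper considers the fixed point $\rho-\lambda\sigma$ for large $\lambda$, writes its Jordan decomposition $\Delta_+-\Delta_-$, and invokes Lemma~\ref{lem:fpProperty}.1 to conclude that $\Delta_\pm$ are again fixed points; it then sets $\eta=\Delta_+$. You instead prove directly, via the dual map and the operator inequality $\E^*(P)\geq P$ implicit in Theorem~\ref{thm:SPnondec}, that the complement $Y=\supp(\rho)\ominus\supp(\sigma)$ is itself $\E$-invariant, and then read off $\eta$ as the normalised compression $Q\rho Q$.

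Your argument is a little longer but buys more: it yields an explicit formula for $\eta$ and, as a byproduct, the invariance of $Y$ --- a structural fact useful in its own right. The paper's argument is shorter in spirit, but its key claim that one can choose $\lambda$ with $\supp(\Delta_-)=\supp(\sigma)$ is delicate: for generic Hermitian $\rho$ with nonzero off-diagonal block relative to $\supp(\sigma)\oplus Y$ (take $\E=\mathcal I$, $\sigma=|0\rangle\langle 0|$, and $\rho$ with a nonzero $|0\rangle\langle 1|$ entry), the negative eigenspace of $\rho-\lambda\sigma$ never coincides exactly with $\supp(\sigma)$ for finite $\lambda$. Your duality step is precisely what pins the decomposition to $\supp(\sigma)\oplus Y$, so your proof is the more self-contained of the two.
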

\begin{proof}
Note that for any $\lambda>0$, $\rho-\lambda \sigma$ is again a fixed point of $\E$. We can take $\lambda$ sufficiently large such that
$\rho - \lambda\sigma = \Delta_+ - \Delta_-$ with $\Delta_{\pm}\geq 0$, $\supp(\Delta_-) = \supp(\sigma)$, and
 $\supp(\Delta_+)$ is the orthogonal complement of $\supp(\Delta_-)$ in $\supp(\rho)$. By Lemma~\ref{lem:fpProperty}.1, both $ \Delta_+$ and $\Delta_-$ are again fixed point states of $\E$. Let $\eta = \Delta_+$. We have $$\supp(\rho) = \supp(\rho - \lambda\sigma) = \supp(\Delta_+)\oplus \supp(\Delta_-)=\supp(\eta)\oplus \supp(\sigma).$$
    \qed
\end{proof}

%

The following corollary is immediately from Lemma~\ref{lem:FPSdirsum}.
\begin{corollary}\label{cor:dec} Let $\rho$ be a fixed point state of $\E$. Then $\supp(\rho)$ can be decomposed into the direct sum of some orthogonal BSCCs.
\end{corollary}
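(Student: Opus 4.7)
The plan is to prove the corollary by strong induction on $d=\dim(\supp(\rho))$. The key tool is Lemma~\ref{lem:FPSdirsum}, which lets us peel off an orthogonal fixed-point component whenever $\rho$ is not already minimal, combined with Theorem~\ref{thm:BSCC}, which identifies minimal fixed point states with BSCCs.

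For the base case $d=1$, any fixed point state $\rho$ supported on a one-dimensional subspace is automatically minimal (there is no proper nonzero subspace of $\supp(\rho)$), so by Theorem~\ref{thm:BSCC} the space $\supp(\rho)$ is itself a BSCC, and the trivial one-term decomposition works. For the inductive step, assume the result holds for every fixed point state whose support has dimension strictly less than $d$, and let $\rho$ be a fixed point state with $\dim(\supp(\rho))=d$. There are two cases. If $\rho$ is minimal, Theorem~\ref{thm:BSCC} again delivers that $\supp(\rho)$ is a BSCC and we are done.

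Otherwise there exists a fixed point state $\sigma$ with $\supp(\sigma)\subsetneq\supp(\rho)$. Apply Lemma~\ref{lem:FPSdirsum} to obtain a fixed point state $\eta$ with $\supp(\eta)\perp\supp(\sigma)$ and
\[
\supp(\rho)=\supp(\eta)\oplus\supp(\sigma).
\]
Since both $\supp(\eta)$ and $\supp(\sigma)$ are proper subspaces of $\supp(\rho)$, both have dimension strictly less than $d$. Invoking the inductive hypothesis separately on $\eta$ and on $\sigma$, we obtain orthogonal BSCC decompositions $\supp(\eta)=\bigoplus_i B_i^{(\eta)}$ and $\supp(\sigma)=\bigoplus_j B_j^{(\sigma)}$. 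Orthogonality within each family is given by induction, and orthogonality across the two families follows from $\supp(\eta)\perp\supp(\sigma)$. Concatenating the two lists yields the desired orthogonal BSCC decomposition of $\supp(\rho)$.

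The only mildly delicate point is ensuring that the fixed point state $\sigma$ witnessing non-minimality can be chosen so that Lemma~\ref{lem:FPSdirsum} applies cleanly, but this is immediate from the definition of minimality: non-minimality of $\rho$ produces some fixed point state whose support is strictly contained in $\supp(\rho)$, which is precisely the hypothesis of Lemma~\ref{lem:FPSdirsum}. Everything else is bookkeeping, so no serious obstacle is expected.
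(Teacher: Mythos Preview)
Your proof is correct and follows essentially the same approach as the paper. The paper's argument is phrased as ``repeat Lemma~\ref{lem:FPSdirsum} until every piece is minimal, then invoke Theorem~\ref{thm:BSCC}''; your strong induction on $\dim(\supp(\rho))$ is simply the standard way to make that recursion rigorous, with the same two ingredients (Lemma~\ref{lem:FPSdirsum} for the splitting step and Theorem~\ref{thm:BSCC} for the terminal case) doing all the work.
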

\renewcommand{\proofname}{Proof}
\begin{proof}
If $\rho$ is minimal, then by Theorem~\ref{thm:BSCC}, $\supp(\rho)$ is itself a BSCC and we are done. Otherwise, we apply Lemma \ref{lem:FPSdirsum} to obtain two fixed point states of $\E$ with smaller orthogonal supports. Repeating this procedure, we will  get a set of minimal fixed point states $\rho_1, \cdots, \rho_k$ with mutually orthogonal supports, and $$\supp(\rho) = \bigoplus_{i=1}^k \supp(\rho_i).$$ Finally, from Lemma~\ref{lem:fpProperty}.2 and Theorem~\ref{thm:BSCC}, each $\supp(\rho_i)$ is a BSCC.  \qed
\end{proof}

\renewcommand{\proofname}{Proof of Theorem \ref{thm:Hdec}}
\begin{proof}
Direct from Corollary~\ref{cor:dec} by noting that $\E_\infty(I)$ is a fixed point state of $\E$ with $\supp(\E_\infty(I)) = \E_\infty(\hs)$.     \qed
\end{proof}

\renewcommand{\proofname}{Proof of Lemma \ref{lem:BSCCRG}.2}
\begin{proof} Suppose without loss of generality that $\dim (\bscc_1)< \dim(\bscc_2)$. Let $\rho$ and $\sigma$ be the minimal fixed point states corresponding to $\bscc_1$ and $\bscc_2$, respectively. Then from Theorem~\ref{thm:BSCC}, $\supp(\rho)=\bscc_1$ and $\supp(\sigma)=\bscc_2$.
Similar to the proof of Lemma~\ref{lem:FPSdirsum}, we can take $\lambda$ sufficiently large such that
$\rho - \lambda\sigma = \Delta_+ - \Delta_-$ with $\Delta_{\pm}\geq 0$, $\supp(\Delta_-) = \supp(\sigma)$, and
 $\supp(\Delta_+)\bot \supp(\Delta_-)$. Let $P$ be the projection onto $\bscc_2$.
Then $$P\rho P = \lambda P\sigma P +  P\Delta_+P - P\Delta_- P= \lambda \sigma - \Delta_- $$ is a fixed point state as well.
Note $\supp(P\rho P) \subseteq \bscc_2$ and the fact that $\sigma$ is the minimal fixed point state corresponding to $\bscc_2$. It follows that
$P\rho P = p\sigma$ for some $p\geq 0$. Now if $p>0$, then by Proposition \ref{prop:sup}.3 we have
    \begin{equation*}
        \bscc_2 = \supp(\sigma) = \supp(P\rho P) = \spa\{P |\psi\>:|\psi\>\in \bscc_1\}.
    \end{equation*}
    This implies $\dim (\bscc_2)\leq \dim(\bscc_1)$, contradicting our assumption. Thus we have $P\rho P = 0$, which implies $\bscc_1\bot\bscc_2$.
    \qed
\end{proof}

\renewcommand{\proofname}{Proof of Theorem \ref{thm:BSCCUnitary}}
\begin{proof}
Let $b_i = \dim(\bscc_i)$, and $d_i = \dim(D_i)$. We prove by induction that $b_i = d_i$ for any $1\leq i\leq \min\{u,v\}$. Thus $u=v$ as well.

First, we claim $b_1 = d_1$. Otherwise let, say, $b_1<d_1$. Then $b_1< d_j$ for any $j$. Thus by Lemma \ref{lem:BSCCRG}.2, we have $$\bscc_1\bot \bigoplus_{j=1}^v D_j.$$ But we also have $\bscc_1 \bot T_\E$, a contradiction as
$$\bigoplus_{j=1}^v D_j \bigoplus T_\E = \hs.$$

Suppose we have $b_i = d_i$ for any $i<n$. We claim $b_n = d_n$ as well. Otherwise let, say, $b_n<d_n$. Then from Lemma \ref{lem:BSCCRG}.2, we have $$\bigoplus_{i=1}^n \bscc_i\bot \bigoplus_{i=n}^v D_i,$$ and hence $$\bigoplus_{i=1}^n \bscc_i \subseteq
\bigoplus_{i=1}^{n-1} D_i.$$ On the other hand, we have
$$\dim(\bigoplus_{i=1}^n \bscc_i ) = \sum_{i=1}^n b_i > \sum_{i=1}^{n-1} d_i = \dim(\bigoplus_{i=1}^{n-1} D_i),$$
a contradiction.
    \qed
\end{proof}

\renewcommand{\proofname}{Proof of Theorem~\ref{thm:algdecomposition}}
\begin{proof}

The correctness of Algorithm~\ref{al2} follows from Theorem~\ref{thm:Flimit}, Lemma~\ref{lem:fpProperty} and Corollary~\ref{cor:dec}. For the time complexity, note that similar to Algorithm \ref{alg:checkBSCC}, the non-recursive part of the procedure $Decompose(X)$ runs in time $O(n^6)$. Thus its total complexity is $O(n^7)$, as the procedure calls itself at most $O(n)$ times.

Algorithm~\ref{al2} first computes $\E_\infty(\h)$, which costs time $O(n^8)$, by Lemma~\ref{lem:algo}.1, and then feeds it into the procedure $Decompose(X)$. Thus the total complexity of Algorithm~\ref{al2} is $O(n^8)$. \qed
 \end{proof}

\renewcommand{\proofname}{Proof of Lemma~\ref{lem:BSCClimit}}
\begin{proof}
Let $P$ be the projection onto $T_\E = \E_\infty(\hs)^\perp$.
Similar to the proof of Theorem \ref{thm:Flimit}, we have $$q_i = \lim_{k\rightarrow \infty} \tr( P_{B_i}\E^k(\rho) )$$ does exist, and $\tr(P_{B_i}\E_\infty (\rho))\leq q_i$. Moreover
    \begin{equation*}
        1 =\tr((I-P)\E_\infty(\rho)) = \sum_i  \tr( P_{B_i}\E_\infty(\rho) ) \leq \sum_i q_i = \lim_{k\ra \infty}\tr((I-P)\E^k(\rho))=1.
    \end{equation*}
    This implies $q_i = \tr( P_{B_i}\E_\infty(\rho) )$.\qed
\end{proof}

\renewcommand{\proofname}{Proof of Theorem~\ref{thm:reach}}
\begin{proof}  The claim that $$\Pr(\rho\vDash \Diamond G)  =  \tr(P_{G}\widetilde{\E}_\infty (\rho))$$ is directly from Lemma~\ref{lem:BSCClimit} and Theorem~\ref{thm:Hdec}, and the time complexity of computing this quantity follows from Lemma~\ref{lem:algo}.1.
    \qed
\end{proof}

\renewcommand{\proofname}{Proof of Lemma \ref{lem:weakFairness}}
\begin{proof}
As $X$ is not orthogonal to $\bscc$, we can always find a pure state $|\phi\>\in \bscc$ such that $P_X|\phi\>\neq 0$. Now for any $|\psi\>\in \bscc$, if there exists $N$ such that $\tr(P_X\E^{k}(\psi))=0$ for any $k>N$. Then
 $|\phi\>\not\in \RG(\E^{N+1}(\psi))$, which means that $\RG(\E^{N+1}(\psi))$ is a proper invariant subspace of $\bscc$. This contradicts the assumption that $\bscc$ is a BSCC. Thus we have $\tr(P_X\E^{i}(\psi))>0$ for infinitely many $i$.
    \qed
\end{proof}

\renewcommand{\proofname}{Proof of Lemma \ref{lem:strongFairness}}
\begin{proof}
Similar to Proposition 6.2 in \cite{Wolf12} and Lemma 4.1 in \cite{YYFD11}, we can show that the limit $$\widetilde{\E}_\infty := \lim_{N\rightarrow \infty}\frac{1}{N}\sum_{n=1}^N\widetilde{\E}^n$$ exists as well. For any $|\psi\>\in \bscc$, we claim that $\rho_\psi:=\widetilde{\E}_\infty(\psi)$ is a zero operator. Otherwise, it is easy to check that $\rho_\psi$ is a fixed point of $\widetilde{\E}$. Furthermore, from the fact that
$$\E(\rho_\psi) = \widetilde{\E}(\rho_\psi) + P_X\E(\rho_\psi)P_X=\rho_\psi + P_X\E(\rho_\psi)P_X,$$ we have $\tr(P_X\E(\rho_\psi))=0$ as $\E$ is trace-preserving. Thus $P_X\E(\rho_\psi)P_X=0$, and $\rho_\psi$ is also a fixed point of $\E$. Note that $\supp(\rho_\psi)\subseteq X^\perp \cap B$. This contradicts with the assumption that $B$ is a BSCC, by Theorem~\ref{thm:BSCC}.

With the claim, and the fact that $\tr(\widetilde{\E}^i(\psi))$ is non-increasing in $i$, we immediately have $\lim_{i\rightarrow \infty} \tr(\widetilde{\E}^i(\psi)) = 0$.
    \qed
\end{proof}

\renewcommand{\proofname}{Proof of Theorem \ref{thm:Fairness}}
\begin{proof}
Similar to the proof of Lemma \ref{lem:strongFairness}.
    \qed
\end{proof}

To prove Theorem~\ref{thm:xGyG}, we need the following lemma.
\begin{lemma}\label{lem:FsubInv}
Suppose $\E_\infty{(\hs)}$ has a proper invariant subspace $S$. Then for any density operator  $\rho$ with $\supp(\rho)\subseteq \E_\infty{(\hs)}$ and any integer $k$, we have
    \begin{equation*}
        \tr(P_S\E^k(\rho)) = \tr(P_S\rho)
    \end{equation*}
    where $P_S$ is the projection onto $S$.
\end{lemma}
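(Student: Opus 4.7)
The plan is to exploit the Heisenberg dual $\E^{*}$ of $\E$ together with the existence of a fixed point state of $\E$ whose support equals the whole of $\E_\infty(\hs)$. Since $\E_\infty(\hs)$ is itself invariant under $\E$ (it is a direct sum of BSCCs by Theorem~\ref{thm:Hdec}), if $\supp(\rho)\subseteq \E_\infty(\hs)$ then $\supp(\E(\rho))\subseteq\E_\infty(\hs)$ as well, so the general $k\geq 0$ case will follow from the case $k=1$ by an obvious induction.

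For $k=1$, the computation in the proof of Theorem~\ref{thm:SPnondec} shows that invariance of $S$ combined with trace preservation $\E^{*}(I)=I$ yields the operator inequality
\begin{equation*}
D \;:=\; \E^{*}(P_S) - P_S \;\geq\; 0.
\end{equation*}
By duality, $\tr(P_S\E(\rho))-\tr(P_S\rho) = \tr(D\rho)$, so the problem reduces to showing $\tr(D\rho)=0$ whenever $\supp(\rho)\subseteq \E_\infty(\hs)$.

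To obtain this, I would take the distinguished fixed point state $\sigma := \E_\infty(I/d)$. By Lemma~\ref{lem:EinftyCPTP}.1 it is a fixed point of $\E$, and writing $I/d = \sum_i |i\>\<i|/d$ and invoking Proposition~\ref{prop:sup}.2 shows $\supp(\sigma) = \E_\infty(\hs)$. Because $\E(\sigma)=\sigma$, we compute
\begin{equation*}
\tr(D\sigma) \;=\; \tr(\E^{*}(P_S)\sigma) - \tr(P_S\sigma) \;=\; \tr(P_S\E(\sigma)) - \tr(P_S\sigma) \;=\; 0.
\end{equation*}
Since $D\geq 0$ and $\sigma$ is faithful on $\E_\infty(\hs)$, the standard fact that $\tr(D\sigma)=0$ forces $D$ to vanish on $\supp(\sigma)$ delivers $D\,P_{\E_\infty(\hs)}=0$. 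Hence for any $\rho$ with $\supp(\rho)\subseteq \E_\infty(\hs)$ we have $\tr(D\rho)=\tr(D\,P_{\E_\infty(\hs)}\,\rho)=0$, completing the $k=1$ step.

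The essential obstacle is upgrading the inequality supplied by Theorem~\ref{thm:SPnondec} to an \emph{equality}; for a generic invariant subspace this cannot be done, and one genuinely needs that we have restricted attention to the recurrent part $\E_\infty(\hs)$. The catalyst is the faithful fixed point state $\sigma=\E_\infty(I/d)$, whose existence is precisely what distinguishes $\E_\infty(\hs)$ from the transient complement $T_\E$: pairing the positive semi-definite operator $D$ against a state of full support in $\E_\infty(\hs)$ forces $D$ to vanish on \emph{all} of $\E_\infty(\hs)$, which is considerably stronger than the trivial fact that its trace against the specific state $\rho$ is non-negative.
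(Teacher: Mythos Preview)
Your argument is correct. It differs from the paper's proof in an interesting way. The paper invokes Lemma~\ref{lem:FPSdirsum} to produce an invariant orthogonal complement $T$ of $S$ inside $\E_\infty(\hs)$, and then applies Theorem~\ref{thm:SPnondec} to \emph{both} $S$ and $T$: the two resulting inequalities $\tr(P_S\E^k(\rho))\geq\tr(P_S\rho)$ and $\tr(P_T\E^k(\rho))\geq\tr(P_T\rho)$ must sum to~$1$ on each side, so they are forced to be equalities. Your route instead works in the Heisenberg picture, extracting from the proof of Theorem~\ref{thm:SPnondec} the stronger operator inequality $\E^{*}(P_S)\geq P_S$ and then killing the defect $D=\E^{*}(P_S)-P_S$ on $\E_\infty(\hs)$ by pairing it with the faithful fixed state $\sigma=\E_\infty(I/d)$. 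The paper's squeeze argument is shorter once Lemma~\ref{lem:FPSdirsum} is in hand; your argument is a bit more self-contained (it does not need the existence of the complementary invariant subspace~$T$) and makes explicit the operator identity $\E^{*}(P_S)P_{\E_\infty(\hs)}=P_S P_{\E_\infty(\hs)}$, which is somewhat more informative than the trace statement alone.
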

\renewcommand{\proofname}{Proof}
\begin{proof}
    By Lemma \ref{lem:FPSdirsum}, there exists an invariant subspace $T$ such that $\E_\infty{(\hs)}=S\oplus T$ where $S$ and $T$ are orthogonal and invariant. Then by Theorem \ref{thm:SPnondec}, we have \begin{equation*}
        1\geq\tr(P_S\E^k(\rho))+\tr(P_T\E^k(\rho))\geq\tr(P_S\rho)+\tr(P_T\rho)=\tr(\rho)=1.
    \end{equation*}
    Thus $\tr(P_S\E^k(\rho)) = \tr(P_S\rho)$.
    \qed
\end{proof}

\renewcommand{\proofname}{Proof of Theorem \ref{thm:xGyG}}
 \begin{proof}
We first show that $\mathcal{Y}(G)$ is a subspace. Let $|\psi_i\>\in \mathcal{Y}(G)$ and $\alpha_i$ be complex numbers, $i=1,2$. Then there exists $N_i$ such that for any $j\geq N_i$,  $\supp(\E^j(\psi_i))\subseteq G$. Let $|\psi\> = \alpha_1|\psi_1\> + \alpha_2 |\psi_2\>$ and $\rho =|\psi_1\>\<\psi_1| + |\psi_2\>\<\psi_2|$. Then $|\psi\> \in \supp(\rho)$, and from Propositions~\ref{prop:sup}.1, \ref{prop:sup}.2, and \ref{prop:sup}.4 we have $$\supp(\E^j(\psi))\subseteq \supp(\E^j(\rho))=\supp(\E^j(\psi_1))\vee\supp(\E^j(\psi_2))$$ for any $j\geq 0$. Now $|\psi\>\in \mathcal{Y}(G)$ follows by letting $N=\max\{N_1, N_2\}$.

 We divide the rest of proof into six parts.
\begin{itemize}
\item Claim 1: $\mathcal{Y}(G)\supseteq \bigvee_{\bscc\subseteq G} \bscc.$

\smallskip\

For any BSCC $\bscc\subseteq G$, from Lemmas~\ref{lem:EinftyCPTP}.2 and \ref{lem:fpProperty}.2 we have $\bscc\subseteq \E_{\infty}(\h)$. Furthermore, as $\bscc$ is a BSCC, for any $|\psi\>\in \bscc$ and any $i$, $\supp(\E^i(\psi))\subseteq \bscc\subseteq G$. Thus $B\subseteq \mathcal{Y}(G)$, and the result follows from the fact that $\mathcal{Y}(G)$ is a subspace.

\smallskip\

\item Claim 2: $\mathcal{Y}(G)\subseteq \bigvee_{\bscc\subseteq G} \bscc$.

\smallskip\

For any $|\psi\>\in \mathcal{Y}(G)$, note that $\rho_\psi := \E_\infty(\psi)$ is a fixed point state. Let $X=\supp(\rho_\psi)$. We claim $|\psi\>\in X$. This is obvious if $X=\E_\infty(\h)$. Otherwise, as $\E_\infty(I_\h)$ is a fixed point state and
$\E_\infty(\h)=\supp(\E_\infty(I_\h))$, by Lemma \ref{lem:FPSdirsum} we have
$\E_\infty(\h)=X\oplus X^\perp$, where $X^\perp$, the ortho-complement of $X$ in $\E_\infty(\h)$, is also invariant. As $X$ is again a direct sum of some orthogonal BSCCs, by Lemma~\ref{lem:BSCClimit} we have $$\lim_{i\ra \infty}\tr(P_X\E^i(\psi)) = \tr(P_X\E_\infty(\psi))= 1;$$ that is, $$\lim_{i\ra\infty}\tr(P_{X^\perp}\E^i(\psi))=0.$$ Together with Theorem \ref{thm:SPnondec}, this implies $\tr(P_{X^\perp}\psi)=0$, and so $|\psi\>\in X$.

By the definition of $\mathcal{Y}(G)$, there exists $M\geq 0$, such that $\supp(\E^i(\psi))\subseteq G$ for all $i\geq M$. Thus
    \begin{align*}
        X & = \supp(\lim_{N\ra\infty}\frac{1}{N}\sum_{i=1}^N\E^i(\psi))= \supp(\lim_{N\ra\infty}\frac{1}{N}\sum_{i=M}^N\E^i(\psi))\subseteq G.
    \end{align*}
Furthermore, since $X$ can be decomposed into the direct sum of some BSCCs, we have $X\subseteq \bigvee_{\bscc\subseteq G} \bscc$.
Then the result follows by noting $|\psi\>\in X$.

\smallskip\

\item Claim 3: $\mathcal{Y}(G^\perp)^\perp\subseteq \mathcal{X}(G)$.

\smallskip\

First, from Claims 1 and 2 above we have
$\mathcal{Y}(G^\perp)\subseteq G^\perp$ and $G':=\mathcal{Y}(G^\perp)^\perp$ is invariant.
Thus $G\subseteq \mathcal{Y}(G^\perp)^\perp$, and $\E$ is a also super-operator on $G'$; that is, the pair
$\<G', \E\>$ is again a quantum Markov chain. Furthermore, Claim 1 implies that any BSCC in $G^\perp$ is also contained in $\mathcal{Y}(G^\perp)$. Thus there is no BSCC in $G'\cap G^\perp$. By Theorem \ref{thm:Fairness}, for any $|\psi\>\in G'$, $\lim_{i\ra\infty}\tr[(P_{G^\perp}\circ\E)^i(\psi)] = 0$. Thus $|\psi\>\in \mathcal{X}(G)$ by definition.

\smallskip\

\item Claim 4: $\mathcal{X}(G) \subseteq\mathcal{Y}(G^\perp)^\perp$.

\smallskip\

Similar to Claim 3, we have
$\mathcal{Y}(G^\perp)\subseteq G^\perp$ and $\mathcal{Y}(G^\perp)$ is invariant.
 Let $P$ be the projection onto $\mathcal{Y}(G^\perp)$. Then $P_{G^\perp} P P_{G^\perp}=P$. For any $|\psi\>\in  \mathcal{X}(G)$, we calculate
   \begin{equation*}
      \tr(P(P_{G^\perp}\circ \E)(\psi))=  \tr(P_{G^\perp} P P_{G^\perp} \E(\psi)) = \tr(P\E(\psi)) \geq \tr(P\psi),
    \end{equation*}
    where the last inequality is by Theorem \ref{thm:SPnondec}. Therefore
    \begin{equation*}
        0 = \lim_{i\ra\infty}\tr((P_{G^\perp}\circ \E)^i(\psi)) \geq \lim_{i\ra\infty}\tr(P(P_{G^\perp}\circ \E)^i(\psi)) \geq \tr(P\psi),
    \end{equation*}
    and so $|\psi\>\in \mathcal{Y}(G^\perp)^\perp$.

    \smallskip\

  \item  Claim 5: $\bigvee_{\bscc\subseteq G} \bscc\subseteq \E_\infty(G^\perp)^\perp$.

  \smallskip\

Suppose a BSCC $\bscc\subseteq G$. Then we have $\tr(P_BI_{G^\perp}) = 0$, and so $\tr(P_B\E^i(I_{G^\perp}))=0$ for any $i\geq 0$ by Lemma \ref{lem:FsubInv}. Thus $\tr(P_B\E_\infty(I_{G^\perp}))=0$, leading to $\bscc\perp\E_\infty(G^\perp)$. Therefore $\bscc\subseteq \E_\infty(G^\perp)^\perp$. Then the result follows from the fact that $\E_\infty(G^\perp)^\perp$ is a subspace.

\smallskip\

\item  Claim 6: $\E_\infty(G^\perp)^\perp \subseteq \bigvee_{\bscc\subseteq G} \bscc$.

\smallskip\

    By Corollary~\ref{cor:dec}, $\E_\infty(G^\perp)^\perp$ can be decomposed into direct sum of BSCCs $\bscc_1,\bscc_2,\cdots$. For any $\bscc_i$, we have $\tr(P_{\bscc_i}\E_\infty(I_{G^\perp})) = 0$. Thus $\tr(P_{\bscc_i}I_{G^\perp}) = 0$, meaning that $\bscc_i\perp G^\perp$. Therefore $\bscc_i\subseteq G$, and the result holds.
\end{itemize}
The invariance of $\mathcal{X}(G)$ and $\mathcal{Y}(G)$ is already included in Claims 1 and 2. This complete the proof.\qed
\end{proof}

%
%

\renewcommand{\proofname}{Proof of Theorem~\ref{thm:algpersistence}}
\begin{proof}
The correctness of Algorithm~\ref{alg:pers} follows from Theorems~\ref{thm:xGyG} and \ref{thm:probPR}. The time complexity is again dominated by Jordan decomposition used in computing $\E_\infty(\rho)$ and $\E_\infty(G^\perp)$, thus it is $O(n^8)$.
\qed
\end{proof}

\begin{thebibliography}{99}

\bibitem{Am03} Ambainis, A.: Quantum Walks and Their Algorithmic Applications. Int. J. Quantum Inform. 1, 507-518 (2003)

\bibitem{BK08} Baier, C., Katoen, J.-P.: Principles of Model Checking. The MIT Press, Cambridge, Massachusetts (2008)

\bibitem{BCGPY12} Burgarth, D., Chiribella, G., Giovannetti, V., Perinotti, P., Yuasa, K.: Ergodic and Mixing Quantum Channels in Finite Dimensions: arXiv:1210.5625v1

\bibitem{CZ12} Cirac, J.I., Zoller, P.: Goals and Opportunities in Quantum Simulation. Nat. Phys. 8, 264-266 (2012)

\bibitem{Algor} Cormen, T.H., Leiserson, C.E., Rivest, R.L., Stein, C.: Introduction to Algorithms. The MIT Press, Cambridge, Massachusetts (2009)

\bibitem{DAV11} Davidson, T.A.S.: Formal Verification Techniques using Quantum Process Calculus. Ph.D. thesis, University of Warwick (2011)

\bibitem{DP06} D'Hondt, E., Panangaden, P.: Quantum Weakest Preconditions. Math. Struct. Comp. Sci. 16 429-451 (2006)

\bibitem{FDY11} Feng, Y., Duan, R.Y., Ying, M.S.: Bisimulation for Quantum Processes. In: Proceedings of the 38th ACM SIGPLAN-SIGACT Symposium on Principles of Programming Languages (POPL), pp. 523-534. ACM New York, New York (2011)

\bibitem{FDY12} Feng, Y., Duan, R.Y., Ying, M.S.: Bisimulation for Quantum Processes. ACM T. Progr. Lang. Sys. 34, art. no: 17 (2012)

\bibitem{GZ04} Gardiner, C., Zoller, P.: Quantum Noise: A Handbook of Markovian and Non-Markovian Stochastic Methods with Applications to Quantum Optics. Springer, Heidelberg (2004)

\bibitem{GN05} Gay, S.J., Nagarajan, R.: Communicating Quantum Processes. In: Proceedings of the 32nd ACM Symposium on Principles of Programming Languages (POPL), pp. 145-157. ACM New York, New York (2005)

\bibitem{GPN10} Gay, S.J., Papanikolaou, N., Nagarajan, R.: Specification and Verification of Quantum Protocols. In: Semantic Techniques in Quantum Computation, pp. 414-472. Cambridge University Press, Cambridge (2010)

\bibitem{GPN08} Gay, S.J., Papanikolaou, N., Nagarajan, R.,: QMC: a Model Checker for Quantum Systems. In: CAV 2008. LNCS, vol. 5123, pp. 543-547. Springer, Heidelberg (2008)


\bibitem{HSP83} Hart, S., Sharir, M., Pnueli, A.: Termination of Probabilistic Concurrent Programs. ACM T. Progr. Lang. Sys. 5, 356-380 (1983)

\bibitem{JL04} Jorrand, P., Lalire, M.: Toward a Quantum Process Algebra. In: Proceedings of the First ACM Conference on Computing Frontiers, pp. 111-119. ACM New York, New York (2004)

\bibitem{MU05} Mitzenmacher, M., Upfal, E.: Probability and Computing: Randomised Algorithms and Probabilistic Analysis. Cambridge University Press, Cambridge (2005)

\bibitem{NC00} Nielsen, M.A., Chuang, I.L.: Quantum Computation and Quantum Information. Cambridge University Press, Cambridge (2000)

\bibitem{Rosmanis12} Rosmanis, A.: Fixed Space of Positive Trace-Preserving Super-Operators. Linear Algebra Appl. 437, 1704-1721 (2012)

\bibitem{SSL02} Schirmer, S.G., Solomon, A.I., Leahy, J.V.: Criteria for Reachability of Quantum States. J. Phys. A: Math. Gen. 35, 8551-8562 (2002)

\bibitem{Se04} Selinger, P.: Towards a Quantum Programming Language. Math. Struct. Comp. Sci. 14, 527-586 (2004)

\bibitem{Wolf12} Wolf, M.M.: Quantum Channels and Operators: Guided Tour, unpublished.

\bibitem{Yanna90} Yannakakis, M.: Graph-Theoretic Methods in Database Theory. In: Proceedings of the 9th ACM SIGACT-SIGMOD-SIGART symposium on Principles of database systems, pp. 230-242. ACM New York, New York (1990)

\bibitem{Yin11} Ying M.S.: Floyd-Hoare Logic for Quantum Programs. ACM T. Progr. Lang. Sys. 33, art. no: 19 (2011)

\bibitem{YY12} Yu  N.K., Ying M.S.: Reachability and Termination Analysis of Quantum Concurrent Programs. In: CONCUR 2012. LNCS vol. 7455, pp. 69-83. Springer, Heidelberg (2012)

\bibitem{YYFD11} Ying, M.S., Yu, N.K., Feng, Y., Duan, R.Y.: Verification of Quantum Programs.: Sci. Comput. Program. accepted (2013) (also see:
arXiv:1106.4063)

\end{thebibliography}
\end{document}